\tikzstyle{state}+=[minimum size = 8mm, inner sep=0,outer sep=1]
\tikzset{->,>=stealth'}
\definecolor{wwhite}{gray}{1}
\newcommand*{\ditto}{---\textquotedbl---}
\newcolumntype{L}{>{\raggedright\arraybackslash}p{1.6cm}}
\newcolumntype{C}{>{\centering\arraybackslash}p{1.6cm}}
\newcolumntype{R}[1]{>{\raggedleft\arraybackslash}p{#1}}
\newcommand{\thmhelperpre}[2]{\newcommand{\theoremlike}[1]{\par\medskip\penalty-250\refstepcounter{theorem}{\bfseries\noindent##1 \ref{#1}.}\itshape}\theoremlike{#2}}
\newcommand{\thmhelperpost}{\par\medskip%
 \renewcommand{\theoremlike}[1]{\par\medskip\penalty-250\refstepcounter{theorem}{\bfseries\noindent##1 \thesection .\thetheorem.}\itshape}%
}
\newcommand{\QED}{\qed}
\newcommand{\qee}{\hfill$\triangle$} 
\newcommand{\para}[1]{\medskip \noindent\textbf{#1.}}
\newcommand{\pr}{\mathbb P}
\newcommand{\expected}{\mathbb E}
\newcommand{\pmin}{p_{\mathsf{min}}}
\newcommand{\pt}{p_{\mathsf{term}}}
\newcommand{\Nset}{\mathbb N}
\newcommand{\Mc}{\mathcal{M}}
\newcommand{\Pm}{\mathbf{P}}
\newcommand{\St}{S}
\newcommand{\init}{\mu}
\newcommand{\Lab}{L}
\newcommand{\dra}{\mathcal{A}}
\newcommand{\draS}{Q}
\newcommand{\draAl}{{2^{Ap}}}
\newcommand{\draTr}{\gamma}
\newcommand{\draInit}{q_o}
\newcommand{\draAcc}{Acc}
\newcommand{\ccand}{C}
\newcommand{\bscc}{\mathsf{BSCC}}
\newcommand{\scs}{\mathsf{SC}}
\newcommand{\run}{\rho}
\renewcommand{\path}{\pi}
\newcommand{\emptypath}{\lambda}
\newcommand{\concat}{\,.\,}
\newcommand{\runs}{\mathsf{Runs}}
\newcommand{\candidatepath}{\kappa}
\newcommand{\candidate}{K}
\newcommand{\kcand}[1]{\mathit{Cand}_{#1}}
\newcommand{\skcand}[1]{\mathit{SCand}_{#1}}
\newcommand{\support}[1]{\overline{#1}}
\newcommand{\comp}{C}
\DeclareMathOperator{\Inf}{Inf}
\newcommand{\reach}{\Diamond}
\newcommand{\alws}{\Box}
\newcommand{\MP}{\mathsf{MP}}
\newcommand{\X}{{\ensuremath{\mathbf{X}}}}
\newcommand{\U}{{\ensuremath{\mathbf{U}}}}
\newcommand{\True}{\mathbf{Yes}}
\newcommand{\False}{\mathbf{No}}
\newcommand{\cone}{\mathsf{Cone}}
\newcommand{\nextState}{\mathsf{NextState}}
\newcommand{\trerr}[1]{\xi}
\newcommand{\mperr}[1]{\zeta}
\newcommand{\mybigspace}{}
\title{Faster Statistical Model Checking for Unbounded Temporal Properties
\thanks{This research was funded in part by the European Research Council (ERC) under grant agreement 267989 (QUAREM), the Austrian Science Fund (FWF) under grants project S11402-N23 (RiSE) and Z211-N23 (Wittgenstein Award), the People Programme (Marie Curie Actions) of the European Union's Seventh Framework Programme (FP7/2007-2013) REA Grant No 291734, the SNSF Advanced Postdoc. Mobility Fellowship -- grant number P300P2\_161067, and the Czech Science Foundation under grant agreement P202/12/G061.}
}
\author{Przemys\l aw Daca\inst1 \and Thomas A.\ Henzinger\inst1 \and Jan K{{\v r}}et\'insk\'y\inst2 \and Tatjana Petrov\inst1}
\institute{IST Austria \and Institut f\"ur Informatik, Technische Universit\"at M\"unchen, Germany}
\begin{document}

\pagestyle{plain}
\maketitle

\mybigspace\mybigspace

\begin{abstract}
We present a new algorithm for the statistical model checking of Markov chains with respect to unbounded temporal properties, including full linear temporal logic. The main idea is that we monitor each simulation run on the fly, in order to detect quickly if a bottom strongly connected component is entered with high probability, in which case the simulation run can be terminated early. As a result, our simulation runs are often much shorter than required by termination bounds that are computed a priori for a desired level of confidence on a large state space. In comparison to previous algorithms for statistical model checking our method is not only faster in many cases but also requires less information about the system, namely, only the minimum transition probability that occurs in the Markov chain. 
In addition, our method can be generalised to unbounded quantitative properties such as mean-payoff bounds.
\end{abstract}

\mybigspace


\mybigspace
\section{Introduction}
\label{sec:intro}

Traditional numerical algorithms for the verification of Markov chains may be 
computationally intense or inapplicable, when facing a large state space or 
limited knowledge about the chain. 
To this end, statistical algorithms are used as a powerful alternative.
\emph{Statistical model checking} (SMC) typically refers to approaches where 
(i) 
finite paths of the Markov chain are sampled a finite number of times, 
(ii) 
the property of interest is verified for each sampled path 
(e.g.\ state $r$ is reached),
and 
(iii) 
hypothesis testing or statistical estimation is used to infer conclusions 
(e.g.\ state $r$ is reached with probability at most~$0.5$) 
and give statistical guarantees 
(e.g.\ the conclusion is valid with $99\%$ confidence).
SMC approaches differ in 
(a) 
the class of properties they can verify 
(e.g.\ bounded or unbounded properties),
(b)
the strength of statistical guarantees they provide 
(e.g.\ confidence bounds, only asymptotic convergence of the method towards the correct value, or none), and
(c) 
the amount of information they require about the Markov chain 
(e.g.\ the topology of the graph).
In this paper, we provide an algorithm for SMC of unbounded properties, with confidence bounds, 
in the setting where only the minimum transition probability of the chain is known. 
Such an algorithm is particularly desirable in scenarios when the system is not known (``black box''), 
but also when it is too large to construct or fit into memory.

Most of the previous efforts in SMC has focused on the analysis of properties 
with bounded horizon 
\cite{Younes02,Sen04,DBLP:journals/sttt/YounesKNP06,DBLP:conf/cmsb/JhaCLLPZ09,DBLP:conf/tacas/JegourelLS12,DBLP:journals/corr/abs-1207-1272}.
For bounded properties 
(e.g.\ state $r$ is reached with probability at most $0.5$ in the first 
$1000$ steps) 
statistical guarantees can be obtained in a completely black-box setting, 
where execution runs of the Markov chain can be observed, but no other 
information about the chain is available.
Unbounded properties 
(e.g.\ state $r$ is reached with probability at most $0.5$ in any number of 
steps) 
are significantly more difficult, as a stopping criterion is needed when 
generating a potentially infinite execution run, and some information about 
the Markov chain is necessary for providing statistical guarantees
(for an overview, see Table~\ref{tab:approaches}).
On the one hand, some approaches require the knowledge of the full topology in order to preprocess the Markov chain.
On the other hand, when the topology is not accessible, there are approaches where the correctness of the statistics relies on information ranging from
the second eigenvalue $\lambda$ of the Markov chain, to knowledge of both the number $|S|$ of states and the 
minimum transition probability~$\pmin$.  

\begin{table}[t]

\caption{SMC approaches to Markov chain verification, organised
by (i) the class of verifiable properties, and (ii) by the required information about the Markov chain, where
$\pmin$ is the minimum transition probability,
$|S|$ is the number of states, and 
$\lambda$ is the second largest eigenvalue of the chain.
}\label{tab:approaches}
\begin{tabular}{l|CCCCC}
 {LTL, mean payoff} 		& $\times$ 			& \textbf{here} 	& \mbox{\cite{atva14} (LTL)} 	& 			&\\
 {$\Diamond, \mathbf{U}$} 	& $\times$ 			& \textbf{here}   	& \ditto 			& \cite{sbmf11} 	& \cite{sbmf11}, \cite{ase10} \\
  {bounded} 			& {e.g.~\cite{Younes02,Sen04}} 	& & & &\\
\hline
& {no info} & $\pmin$ & $|S|,\pmin$  &   $\lambda$ & {topology} \\
\end{tabular}

\end{table}

\textbf{Our contribution} is a new SMC algorithm for full linear temporal logic (LTL), 
as well as for unbounded quantitative properties (mean payoff), which provides strong 
guarantees in the form of confidence bounds.
Our algorithm uses less information about the Markov chain than previous 
algorithms that provide confidence bounds for unbounded properties%
---we need to know only the minimum transition probability $\pmin$ of the 
chain, and not the number of states nor the topology.
Yet, experimentally, our algorithm performs in many cases better than these previous 
approaches (see Section~5).
Our main idea is to \emph{monitor each execution run on the fly in order to 
build statistical hypotheses about the structure of the Markov chain}.
In particular, if from observing the current prefix of an execution run we can 
stipulate that with high probability a bottom strongly connected component 
(BSCC) of the chain has been entered, then we can terminate the current 
execution run.  
The information obtained from execution prefixes allows us to terminate executions as soon as the property is decided with the required confidence, which is usually much earlier than any bounds 
that can be computed a priori. As far as we know, this is the first SMC algorithm that uses information 
obtained from execution prefixes.

Finding $\pmin$ is a light assumption in many realistic scenarios and often does not depend on the size of the chain -- 
e.g. bounds on the rates for reaction kinetics in chemical reaction systems are typically known, from a \textsc{Prism} language model they can be easily inferred without constructing the respective state space.

\begin{example}
\label{ex:motivating}
Consider the property of reaching state $r$ in the Markov chain depicted in 
Figure~\ref{fig:ex-MC-intro}. 
While the execution runs reaching $r$ satisfy the property and can be stopped 
without ever entering any $v_i$, the finite execution paths without $r$, such as 
$stuttutuut$, are inconclusive. 
In other words, observing this path does not rule out the existence of a 
transition from, e.g., $u$ to $r$, which, if existing, would eventually be 
taken with probability $1$. 
This transition could have arbitrarily low probability, rendering its 
detection arbitrarily unlikely, yet its presence would change the probability 
of satisfying the property from $0.5$ to~$1$.
However, knowing that if there exists such a transition leaving the set, its 
transition probability is at least $\pmin=0.01$, we can estimate the probability 
that the system is stuck in the set $\{t,u\}$ of states. 
Indeed, if existing, the exit transition was missed at least four times, no matter 
whether it exits $t$ or $u$.
Consequently, the probability that there is no such transition and $\{t,u\}$ 
is a BSCC is at least $1-(1-\pmin)^4$.
\end{example}

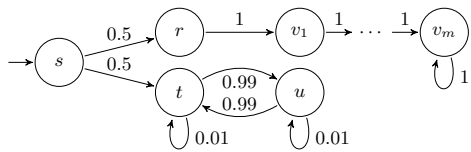
\begin{figure}
\centering
\scalebox{0.8}{
\begin{tikzpicture}
\node[state,initial,initial text=] (s) at (0,0.5){$s$};
\node[state] (r) at (2,1){$r$};
\node[state] (v1) at (4,1){$v_1$};
\node (vi) at (5.2,1){$\cdots$};
\node[state] (vn) at (6.4,1){$v_m$};
\path[->]
(v1) edge node[above]{$1$} (vi)
(vi) edge node[above]{$1$} (vn)
(vn) edge[loop below] node[right,pos=0.2]{$1$} ();
\node[state] (t) at (2,0){$t$};
\node[state] (u) at (4,0){$u$};
\path[->] 
(s) edge node[above]{$0.5$} (r)
(s) edge node[above]{$0.5$} (t)
(r) edge node[above]{$1$} (v1)
(t) edge[bend left] node[below]{$0.99$} (u)
(u) edge[bend left] node[above]{$0.99$} (t)
(t) edge[loop below] node[right,pos=0.2]{$0.01$} ()
(u) edge[loop below] node[right,pos=0.2]{$0.01$} ()
;
\end{tikzpicture}
}
\caption{A Markov chain.}\label{fig:ex-MC-intro}
\end{figure}

This means that, in order to get $99\%$ confidence that $\{t,u\}$ is a BSCC,
we only need to see both $t$ and $u$ around $500$ 
times\footnote{$1-(1-\pmin)^{500}=1-0.99^{500}\approx 0.993$} 
on a run.
This is in stark contrast to a priori bounds that provide the same level of 
confidence, such as the $(1/\pmin)^{|S|}=100^{\mathcal O(m)}$ runs 
required by \cite{atva14}, which is infeasible for large $m$.
In contrast, our method's performance is independent of $m$.
\qee \smallskip

Monitoring execution prefixes allows us to design an SMC algorithm  
for complex unbounded properties such as full LTL.
More precisely, we present a new SMC algorithm for LTL 
over Markov chains, specified as follows:
\begin{description}
\item [\textbf{Input:}] 
  we can sample finite runs of arbitrary length from an unknown finite-state 
  discrete-time Markov chain $\Mc$ starting in the initial state\footnote{%
    We have a black-box system in the sense of \cite{Sen04}, different from
    e.g.\ \cite{Younes02} or \cite{atva09}, where simulations can  be run from 
    any state.}, and
  we are given a lower bound $\pmin>0$ on the transition probabilities 
  in~$\Mc$, an LTL formula~$\varphi$, a threshold probability~$p$, an
  indifference region $\varepsilon>0$, and two error bounds 
  $\alpha,\beta>0$,\footnote{Except for the transition probability 
    bound~$\pmin$, all inputs are standard, as used in 
    literature, e.g.~\cite{Younes02}.}
\item [\textbf{Output:}] 
  if $\pr[\Mc\models \varphi]\geq p+\varepsilon$, return YES with probability 
  at least $1-\alpha$, and\\
  \hspace*{9mm}if $\pr[\Mc\models \varphi]\leq p-\varepsilon$, return NO with 
  probability at least $1-\beta$. 
\end{description}
In addition, we present the first SMC algorithm for computing the mean payoff of Markov chains whose states are labelled with rewards. 

\textbf{Related work.} 
To the best of our knowledge, we present the first SMC algorithm that provides confidence bounds for unbounded qualitative properties with access to only the minimum probability of the chain $\pmin$,
and the first SMC algorithm for quantitative properties.
For completeness, we survey briefly other related SMC approaches. 
SMC of unbounded properties, usually ``unbounded until'' properties, was 
first considered in \cite{vmcai04} and the first approach was proposed in \cite{cav05}, but observed incorrect in \cite{ase10}.
Notably, in~\cite{sbmf11} two approaches are described. 
The first approach proposes to terminate sampled paths at every step with some probability $\pt$.
In order to guarantee the asymptotic convergence of this method, the second eigenvalue $\lambda$ of 
the chain must be computed, which is as hard as the verification problem itself.
It should be noted that their method provides only asymptotic guarantees as the width of the confidence interval converges to zero.
The correctness of \cite{DBLP:journals/apal/LassaigneP08} 
relies on the knowledge of the second eigenvalue $\lambda$, too.
The second approach of~\cite{sbmf11} requires the knowledge of the chain's 
topology, which is used to transform the chain so that all potentially 
infinite paths are eliminated.
In \cite{ase10}, a similar transformation is performed, again requiring 
knowledge of the topology.
The (pre)processing of the state space required by the topology-aware methods,
 as well as by traditional numerical methods for Markov chain analysis, is a major practical hurdle for large (or unknown) state spaces.
In \cite{atva14} a priori bounds for the length of execution runs are 
calculated from the minimum transition probability and the number of states.
However, without taking execution information into account, these bounds are 
exponential in the number of states and highly impractical, as illustrated in 
the example above.
Another approach, limited to ergodic Markov chains, is taken in \cite{atva09}, based on coupling methods.
There are also extensions of SMC to timed systems \cite{DBLP:journals/sttt/DavidLLMP15}.
Our approach is also related to \cite{GrosuS05,OudinetDGLP11}, where the product of a non-deterministic system and B\"{u}chi automaton is explored for accepting lassos.
We are not aware of any method for detecting BSCCs by observing a single run, employing no directed search of the state space.


\textbf{Experimental evaluation.}  Our idea of inferring the structure
of the Markov chain on the fly, while generating execution runs,
allows for their early termination.  In Section~5 we will see that for
many chains arising in practice, such as the concurrent probabilistic
protocols from the \textsc{Prism} benchmark suite \cite{benchmarks},
the BSCCs are reached quickly and, even more importantly, can be small
even for very large systems.  Consequently, many execution runs can be
stopped quickly.  Moreover, since the number of execution runs
necessary for a required precision and confidence is independent of the size
of the state space, it needs not be very large even for highly confident
results (a good analogy is that of the opinion polls: the precision and confidence of opinion polls is regulated by the sample size and is independent of the size of the population). 
It is therefore not surprising
that, experimentally, in most cases from the benchmark suite, our
method outperforms previous methods (often even the numerical methods)
despite requiring much less knowledge of the Markov chain, and despite
providing strong guarantees in the form of confidence bounds.  In
Section~\ref{sec:discus}, we also provide theoretical bounds on the
running time of our algorithm for classes of Markov chains on which it
performs particularly well.

\mybigspace
\section{Preliminaries}
\label{sec:prelim}

\begin{definition} [Markov chain]
A \emph{Markov chain (MC)} is a tuple $\Mc = (\St, \Pm, \init)$, where
\begin{itemize}
\item $\St$ is a finite set of states,
\item $\Pm \;:\; \St \times \St \to [0,1]$ is the transition probability matrix, such that for every $s\in \St$ it holds $\sum_{s'\in \St} \Pm(s,s') = 1$,
\item $\init$ is a probability distribution over $\St$. 
\end{itemize}
\end{definition}
We let $\pmin:=\min(\{\Pm(s,s') > 0\mid s,s'\in\St\})$ denote the smallest positive transition probability in $\Mc$.
A \emph{run} of $\Mc$ is an infinite sequence $\run = s_0 s_1 \cdots$ of states, such that 
for all $i\geq 0$, $\Pm(s_i, s_{i+1}) > 0$; we let $\run[i]$ denote the state $s_i$. 
A \emph{path} in $\Mc$ is a finite prefix of a run of $\Mc$. We denote the empty path by $\emptypath$ and concatenation of paths $\path_1$ and $\path_2$ by $\path_1\concat\path_2$.
Each path $\path$ in $\Mc$ determines the set of runs $\cone(\path)$ consisting of all runs that start with $\path$. 
To $\Mc$ we assign the probability space $
(\runs,\mathcal F,\pr)$, where $\runs$ is the set of all runs in $\Mc$, $\mathcal F$ is the $\sigma$-algebra generated by all $\mathsf{Cone}(\path)$,
and $\pr$ is the unique probability measure such that
$\pr[\mathsf{Cone}(s_0s_1\cdots s_k)] = 
\mu(s_0)\cdot\prod_{i=1}^{k} \Pm(s_{i-1},s_i)$, where the empty product equals $1$.
The respective expected value of a random variable $f:\runs\to\mathbb R$ is $\expected[f]=\int_\runs f\ d\,\pr$.

A non-empty set $\comp\subseteq \St$ of states is \emph{strongly connected} (SC) 
if for every $s,s'\in\comp$ there is a path from $s$ to $s'$.
A set of states $\comp\subseteq \St$ is a \emph{bottom strongly connected component} (BSCC) of $\Mc$, if it is a maximal SC, and
 for each $s\in \comp$ and $s'\in\St\setminus\comp$ we have $\Pm(s,s')=0$.
The sets of all SCs and BSCCs in $\Mc$ are denoted by $\scs$ and $\bscc$, respectively. 
Note that with probability $1$, the set of states that appear infinitely many times on a run forms a BSCC.
From now on, we use the standard notions of SC and BSCC for directed graphs as well.

\mybigspace
\section{Solution for reachability}
\label{sec:reach}

A fundamental problem in Markov chain verification is computing the probability that a certain set of goal states is reached.
For the rest of the paper, let $\Mc = (\St,\Pm,\init)$ be a Markov chain and  
$G \subseteq \St$ be the set of the goal states in~$\Mc$. We let 
$
\reach G:= \{ \run \in \runs \mid \exists i\geq 0: \run[i]\in G\}
$
denote the event that ``eventually a state in $G$ is reached.'' 
 %
The event $\reach G$ is measurable and its probability $\pr[\reach G]$ can be computed numerically 
or estimated using statistical algorithms. 
Since no bound on the number of steps for reaching $G$ is given, 
the major difficulty for any statistical approach is to decide how long each sampled path should be.
We can stop extending the path either when we reach $G$, or when no more new states can be reached anyways. 
The latter happens if and only if we are in a BSCC and we have seen all of its states.
%

In this section, we first show how to monitor each simulation run on the fly, in order to detect quickly if a BSCC has been entered with high probability.
Then, we show how to use hypothesis testing in order to estimate $\pr[\reach G]$. 

%

\mybigspace
\subsection{BSCC detection}

We start with an example illustrating how to measure probability of reaching a BSCC from one path observation.


\begin{example}\label{ex:motivate-cand}
Recall Example~\ref{fig:ex-MC-intro} and Figure~\ref{fig:ex-MC-intro}.
Now, consider an execution path $stuttutu$. 
Intuitively, does $\{t,u\}$ look as a good ``candidate'' for being a BSCC of $\Mc$? 
We visited both $t$ and $u$ three times; we have taken a transition from each $t$ and $u$ at least twice without leaving $\{t,u\}$.
By the same reasoning as in Example~\ref{fig:ex-MC-intro}, we could have missed some outgoing transition with probability at most $(1-\pmin)^2$. 
The structure of the system that can be deduced from this path is in Figure~\ref{fig:ev-MC}
and is correct with probability at least $1-(1-\pmin)^2$.
\qee\end{example}

Now we formalise our intuition.
Given a finite or infinite sequence $\run=s_0s_1\cdots$, the \emph{support} of $\run$ is the set $\support{\run}=\{s_0,s_1,\ldots\}$. Further, the \emph{graph of $\run$} is given by vertices $\support{\run}$ and edges $\{(s_i,s_{i+1})\mid i=0,1,\ldots\}$. 

\begin{figure}
\centering
\scalebox{0.8}{
\begin{tikzpicture}
\node[state,initial,initial text=] (s) at (0,1){$s$};
\node[state] (t) at (2,1){$t$};
\node[state] (u) at (4,1){$u$};
\path[->] 
(s) edge node[above]{} (t)
(t) edge[bend left] node[below]{} (u)
(u) edge[bend left] node[above]{} (t)
(t) edge[loop below] ()
;
\end{tikzpicture}
}

\caption{A graph of a path $stuttutu$.}\label{fig:ev-MC}
\end{figure}
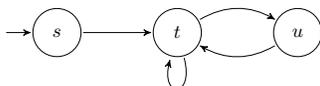

\begin{definition}[Candidate] 
If a path $\path$ has a suffix $\candidatepath$ such that $\support{\candidatepath}$ is a BSCC of the graph of $\path$,
we call $\support{\candidatepath}$ the \emph{candidate of $\path$}.
Moreover, for $k\in\Nset$, we call it a \emph{$k$-candidate (of $\path$)} if each $s\in\support{\candidatepath}$ has at least $k$ occurrences in $\candidatepath$ and the last element of $\candidatepath$ has at least $k+1$ occurrences.
A \emph{$k$-candidate of a run} $\run$ is a $k$-candidate of some prefix of $\run$.
\end{definition}
Note that for each path there is at most one candidate. 
Therefore, we write $\candidate(\path)$ to denote the candidate of $\path$ if there is one, and $\candidate(\path)=\bot$, otherwise.
Observe that each $\candidate(\path)\neq\bot$ is a SC in $\Mc$.

\begin{example}
Consider a path $\path=stuttutu$, then $\candidate(\path)=\{t,u\}$. Observe that $\{t\}$ is not a candidate as it is not maximal. 
Further, $\candidate(\path)$ is a $2$-candidate (and as such also a $1$-candidate), but not a $3$-candidate. 
Intuitively, the reason is that we only took a transition from $u$ (to the candidate) twice, cf.~Example~\ref{ex:motivate-cand}.
\qee\end{example}

Intuitively, the higher the $k$ the more it looks as if the $k$-candidate is indeed a BSCC.
Denoting by $\kcand k(K)$ the random predicate of $K$ being a $k$-candidate on a run, 
the probability of ``unluckily'' detecting any specific non-BSCC set of states $\candidate$ as a $k$-candidate, can be bounded as follows.

\begin{lemma}\label{lem:one-cand}
For every $\candidate\subseteq \St$ such that $\candidate\notin\bscc$, and every $s\in \candidate$,  $k\in\Nset$,
$$
\pr[\kcand k (\candidate) \mid \reach s]\leq (1-\pmin)^k\,.
$$
\end{lemma}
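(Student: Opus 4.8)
The plan is to exploit the single guaranteed escape out of $\candidate$ and to charge the detection of a $k$-candidate to $k$ independent failures to take that escape.

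First I would dispose of the trivial case: if $\pr[\kcand k(\candidate)]=0$ the bound holds immediately, so assume the event has positive probability. Then on it some prefix $\path$ has $\candidate$ as the BSCC of its graph, so $\candidate$ is strongly connected in $\Mc$. Since $\candidate\notin\bscc$, strong connectivity forces an outgoing edge: there exist $s^\ast\in\candidate$ and $s'\in\St\setminus\candidate$ with $\Pm(s^\ast,s')>0$, hence the one-step probability of leaving $\candidate$ from $s^\ast$ is at least $\pmin$. This $s^\ast$ is the escape I will track.

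Next I would establish the structural heart of the argument: on $\kcand k(\candidate)$ the escape out of $s^\ast$ is declined at least $k$ times. Fix the witnessing prefix $\path$ and its suffix $\candidatepath$ with $\support{\candidatepath}=\candidate$. Because $\candidate$ is a BSCC of the graph of $\path$, no edge of that graph leaves $\candidate$; in particular every transition taken from $s^\ast$ anywhere in $\path$ stays inside $\candidate$. The counting conditions of a $k$-candidate, namely that every state of $\candidate$ occurs at least $k$ times in $\candidatepath$ and its last state at least $k+1$ times, guarantee that $s^\ast$ is the source of at least $k$ transitions inside $\candidatepath$ (an occurrence fails to start a transition only when it is the final position, which is exactly what the extra $+1$ for the last state compensates). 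Each of these $k$ transitions stays in $\candidate$, i.e.\ the escape is missed $k$ times.

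Finally I would convert this into the probability bound. Conditioning on $\reach s$, I restart the run at its first visit to $s$ through the strong Markov property and examine the successive visits to $s^\ast$: at each visit the Markov property says the next step leaves $\candidate$ with probability at least $\pmin$, independently of the past, so declining the escape on $k$ such visits has probability at most $(1-\pmin)^k$. The main obstacle is precisely the interaction of this counting with the conditioning on $\reach s$: I must ensure that $k$ genuine escape opportunities at $s^\ast$ can be located on the portion of the run where the strong Markov restart provides fresh, independent trials, rather than being \emph{spent} before $s$ is first reached. Handling this carefully, via the first hitting time of $s$ together with the fact that confirming $\candidate$ as a $k$-candidate forces $k$ fresh departures from $s^\ast$, is the delicate step; the strong Markov property and the per-visit lower bound $\pmin$ then yield the claimed $(1-\pmin)^k$.
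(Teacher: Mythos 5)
Your structural analysis is the same as the paper's: you fix an escape state (the paper's $t$, your $s^\ast$) with a transition of probability at least $\pmin$ to some $s'\notin\candidate$, and you argue --- exactly as the paper does, including the observation that the extra occurrence required of the last element of $\candidatepath$ compensates for the final position --- that on $\kcand k(\candidate)$ the state $s^\ast$ departs at least $k$ times and every departure stays inside $\candidate$. The gap is in your final step, at precisely the point you flag as delicate. The claim you rely on --- that confirming $\candidate$ as a $k$-candidate forces $k$ fresh departures from $s^\ast$ \emph{after} the first visit to $s$ --- is false. Take $\candidate=\{a,b,s\}$ where $s^\ast=a$ is the only state with an escape edge and the internal transitions are $a\to b$, $b\to a$, $b\to s$, $s\to b$: the path $(ab)^k(sb)^k\,a$ makes $\candidate$ a $k$-candidate ($a$ occurs $k{+}1$ times and is the last element, $b$ and $s$ occur at least $k$ times, and the graph of the path is strongly connected and closed), yet all $k$ departures from $a$ precede the first visit to $s$. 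So after your strong-Markov restart at $s$ there may be no fresh trials left to charge, while the trials before the restart are exactly the ones biased by the conditioning: conditioned on $\reach s$, the early departures from $a$ are forced to be safe. The defect is not cosmetic. In the concrete four-state chain realising this picture ($a$ escapes to an absorbing state with probability $\pmin$, $b$ moves to $a$ or $s$ with probability $1/2$ each, $s\to b$ with probability $1$, start in $a$), every run that reaches $s$ afterwards returns to $a$ via risk-free $b\leftrightarrow s$ moves and completes the $1$-candidate upon arrival at $a$, before $a$ ever departs again; hence $\pr[\kcand 1(\candidate)\mid\reach s]=1>1-\pmin$. So the inequality, read literally for an arbitrary $s\in\candidate$ with candidacy counted over the whole run, cannot be established by any relocation of trials --- it fails.

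What the argument does prove, and what the paper actually uses, is the restarted version: for a run started at $s$ (equivalently, with the candidate's occurrence counts accumulated only from $s$ onwards), every one of the $k$ safe departures from $s^\ast$ lies after the restart, the strong Markov property applies to genuinely fresh trials, and the bound $(1-\pmin)^k$ follows exactly as in your (and the paper's) closing sentence. This is also why Lemma~\ref{lem:ith-cand} invokes Lemma~\ref{lem:one-cand} only through \emph{strong} $k$-candidacy $\skcand k$, which requires the candidate to re-accumulate its counts on $b_i\gamma_i$, i.e.\ from the birth state onwards, discarding the discovery phase entirely. Your proof becomes sound if you state and prove that restarted form; the step you propose --- locating the $k$ escape opportunities past the first hitting time of $s$ on the literal event $\kcand k(\candidate)$ --- is the one step that cannot be carried out.
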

\begin{proof}
Since $\candidate$ is not a BSCC, there is a state $t\in\candidate$ with a transition to $t'\notin\candidate$. 
The set of states $\candidate$ is a $k$-candidate of a run, only if $t$ is visited at least $k$ times by the path and was never followed by $t'$ (indeed, even if $t$ is the last state in the path, by definition of a $k$-candidate, there are also at least $k$ previous occurrences of $t$ in the path). 
Further, since the transition from $t$ to $t'$ has probability at least $\pmin$, the probability of not taking the transition $k$ times is at most $(1-\pmin)^k$. 
\QED\end{proof}



\mybigspace
\begin{example}\label{ex:eigen}
We illustrate how candidates ``evolve over time'' along a run. 
Consider a run $\run=s_0s_0s_1s_0\cdots$ of the Markov chain in Figure~\ref{fig:self-looping states}. 
The empty and one-letter prefix do not have the candidate defined, $s_0s_0$ has a candidate $\{s_0\}$, then again $\candidate(s_0s_0s_1)=\bot$, and $\candidate(s_0s_0s_1s_0)=\{s_0,s_1\}$.
One can observe that subsequent candidates are either disjoint or contain some of the previous candidates. 
Consequently, there are at most $2|\St|-1$ candidates on every run, which is in our setting an unknown bound.
\qee\end{example}
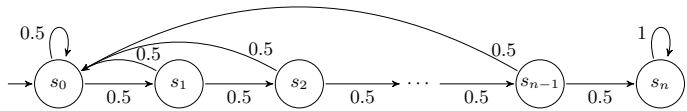
\begin{figure}[h]
\mybigspace
\centering
\scalebox{0.8}{
\begin{tikzpicture}
\node[state,initial,initial text=] (s) at (0,1){$s_0$};
\node[state] (t) at (2,1){$s_1$};
\node[state] (u) at (4,1){$s_2$};
\node (v) at (6,1){$\cdots$};
\node[state] (w) at (8,1){$s_{n-1}$};
\node[state] (x) at (10,1){$s_{n}$};
\path[->] 
(s) edge node[below]{$0.5$} (t)
(t) edge node[below]{$0.5$} (u)
(u) edge node[below]{$0.5$} (v)
(v) edge node[below]{$0.5$} (w)
(w) edge node[below]{$0.5$} (x)
(t) edge[bend right]  node[above, pos=0.1]{0.5} (s)
(u) edge[bend right]  node[above, pos=0.08]{0.5} (s)
(w) edge[bend right]  node[above, pos=0.03]{0.5} (s)
(s) edge[loop above] node[left,pos=0.3]{$0.5$} ()
(x) edge[loop above] node[left,pos=0.3]{$1$} ()
;
\end{tikzpicture}}
\caption{A family (for $n\in\Nset$) of Markov chains with large eigenvalues.}
\mybigspace
\label{fig:self-looping states}
\end{figure}

While we have bounded the probability of detecting any specific non-BSCC set $\candidate$ as a $k$-candidate, 
we need to bound the overall error for detecting a candidate that is not a BSCC.
Since there can be many false candidates on a run before the real BSCC (e.g. Figure~\ref{fig:self-looping states}), we need to bound the error of reporting any of them.
%

In the following, we first formalise the process of discovering candidates along the run. 
Second, we bound the error that any of the non-BSCC candidates becomes a $k$-candidate.
Third, we bound the overall error of not detecting the real BSCC by increasing $k$ every time a different candidate is found.

We start with discovering the sequence of candidates on a run. For a run $\run=s_0s_1\cdots$, consider the sequence of random variables defined by $\candidate(s_0\ldots s_j)$ for $j\geq 0$, and let $(\candidate_i)_{i\geq 1}$ be the subsequence without undefined elements and with no repetition of consecutive elements.
For example, for a run $\varrho=s_0s_1s_1s_1s_0s_1s_2s_2\cdots$, we have $K_1=\{s_1\}$, $K_2=\{s_0,s_1\}$, $K_3=\{s_2\}$, etc.
Let $\candidate_j$ be the last element of this sequence, called the \emph{final candidate}.
Additionally, we define $\candidate_\ell:=\candidate_j$ for all $\ell>j$. 
%
We describe the lifetime of a candidate.
Given a non-final $\candidate_i$, we write $\run=\alpha_i\beta_ib_i\gamma_id_i\delta_i$ so that 
$\support{\alpha_i}\cap \candidate_i=\emptyset$, 
$\support{\beta_ib_i\gamma_i}=\candidate_i$,
$d_i\notin \candidate_i$, and 
$\candidate(\alpha_i\beta_i)\neq \candidate_i$, $\candidate(\alpha_i\beta_ib_i)= \candidate_i$.
Intuitively, we start exploring $\candidate_i$ in $\beta_i$; $\candidate_i$ becomes a candidate in $b_i$, the birthday of the $i$th candidate; it remains to be a candidate until $d_i$, the death of the $i$th candidate. 
For example, for the run $\varrho=s_0s_1s_1s_1s_0s_1s_2s_2\cdots$ and $i=1$, $\alpha_1=s_0$, $\beta_1=s_1$, $b_1=s_1$, $\gamma_1=s_1$, $d_1=s_0$, $\delta_1=s_1s_2s_2\varrho[8]\varrho[9]\cdots$. 
Note that the final candidate is almost surely a BSCC of $\Mc$ and would thus have $\gamma_j$ infinite.

Now, we proceed to bounding errors for each candidate. Since there is an unknown number of candidates on a run, we will need a slightly stronger definition.
First, observe that $\kcand k(\candidate_i)$ iff $\candidate_i$ is a $k$-candidate of $\beta_ib_i\gamma_i$. We say $\candidate_i$ is a \emph{strong $k$-candidate}, written $\skcand k(\candidate_i)$, if it is a $k$-candidate of $b_i\gamma_i$.
Intuitively, it becomes a $k$-candidate even not counting the discovery phase.
As a result, even if we already assume there exists an $i$th candidate, its strong $k$-candidacy gives the guarantees on being a BSCC as above in Lemma~\ref{lem:one-cand}.

\begin{lemma}\label{lem:ith-cand}
For every $i,k\in\Nset$, we have $$\pr[\skcand k(\candidate_i) \mid \candidate_i\notin\bscc]\leq (1-\pmin)^k\,.$$ 
\end{lemma}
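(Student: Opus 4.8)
The plan is to reduce the statement to Lemma~\ref{lem:one-cand} by conditioning on the history of the run up to the moment the $i$th candidate is born, and then invoking the Markov property to analyse only the future of the run from that point. The crucial point is that the notion of \emph{strong} $k$-candidacy is tailored precisely so that all the relevant ``exit decisions'' lie in this future, which is what lets us recover the clean $(1-\pmin)$-per-visit bound even though here $\candidate_i$ is a random set rather than a fixed one.

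First I would observe that whether a finite prefix $\path$ ends exactly at the birthday $b_i$ of the $i$th candidate, and which set $C=\candidate_i$ is born there, is a property of $\path$ alone, since candidacy is defined on finite paths. Hence, up to a null set, the event $\{\candidate_i\notin\bscc\}$ decomposes into the pairwise disjoint cones $\cone(\path)$, where $\path$ ranges over all prefixes that witness the birth of a non-BSCC $i$th candidate $C_\path$. (The only candidate without a death is the final one, which is almost surely a BSCC and thus contributes nothing once we condition on $\candidate_i\notin\bscc$.)

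Next, for a fixed such birth-witnessing $\path$, ending in the state $b_i$, I would apply the Markov property: conditioned on $\cone(\path)$, the continuation of the run is distributed as the Markov chain started from $b_i$, independently of the already-fixed discovery prefix $\alpha_i\beta_i$. Since $C_\path\notin\bscc$, there is a state $t\in C_\path$ with a transition to some $t'\notin C_\path$ of probability at least $\pmin$. By the definition of strong $k$-candidacy, $C_\path$ must be a $k$-candidate of the suffix $b_i\gamma_i$, so $t$ occurs at least $k$ times in $b_i\gamma_i$, and throughout the lifetime $b_i\gamma_i$ the run stays inside $C_\path$; in particular the first $k$ visits to $t$ are never followed by $t'$. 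All the outgoing transitions from these visits are future steps with respect to $\cone(\path)$, so by the strong Markov property at each visit to $t$ the probability of avoiding $t'$ is at most $1-\pmin$, independently across visits. This gives $\pr[\skcand k(C_\path)\mid\cone(\path)]\leq(1-\pmin)^k$, exactly as in Lemma~\ref{lem:one-cand}.

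Finally I would sum over the partition, with $\path$ ranging over the birth-witnessing prefixes:
$$\pr[\skcand k(\candidate_i)\wedge\candidate_i\notin\bscc]=\sum_{\path}\pr[\skcand k(C_\path)\mid\cone(\path)]\,\pr[\cone(\path)]\leq(1-\pmin)^k\sum_{\path}\pr[\cone(\path)]=(1-\pmin)^k\,\pr[\candidate_i\notin\bscc],$$
and dividing by $\pr[\candidate_i\notin\bscc]$ yields the claim. I expect the main obstacle to be the bookkeeping of the first step, namely establishing that ``being the birthday of the $i$th candidate with value $C$'' is measurable with respect to the prefix so that the cones form a genuine partition, together with the realisation that the word \emph{strong} is exactly what forces all $k$ exit decisions at $t$ to fall after the conditioning point, so that the independent $(1-\pmin)$ bound survives the passage from a fixed set (Lemma~\ref{lem:one-cand}) to the random set $\candidate_i$.
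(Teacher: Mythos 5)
Your proof is correct and follows essentially the same route as the paper's: both condition on the birth of the $i$th candidate, use the Markov property to reduce strong $k$-candidacy to the behaviour of the chain restarted at the birth state $b_i$, and then apply the per-visit exit bound of Lemma~\ref{lem:one-cand}. The only difference is cosmetic: the paper decomposes over the finitely many pairs $(\candidate_i,b_i)=(C,s)$ and cites Lemma~\ref{lem:one-cand} as a black box, whereas you partition into the (prefix-free) cones $\cone(\path)$ of birth-witnessing prefixes and re-derive the exit-transition argument inline, which is a finer but equivalent decomposition.
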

\begin{proof}
\begin{align*}
&\pr[\skcand k(\candidate_i) \mid \candidate_i\notin\bscc]\\
&= \frac1{\pr[\candidate_i\notin\bscc]} \sum_{\substack{C\in\scs\setminus\bscc\\s\in C}}\pr[\candidate_i=C,b_i=s]\cdot\pr[\skcand k(C) \mid \candidate_i=C, b_i=s]\\
&= \frac1{\pr[\candidate_i\notin\bscc]} \sum_{\substack{C\in\scs\setminus\bscc\\s\in C}}\pr[\candidate_i=C,b_i=s]\cdot\pr[\kcand k(C) \mid \reach s] \tag{by Markov property}\\
&\leq \frac1{\pr[\candidate_i\notin\bscc]} \sum_{\substack{C\in\scs\setminus\bscc\\s\in C}} \pr[\candidate_i=C,b_i=s]\cdot (1-\pmin)^k \tag{by Lemma~\ref{lem:one-cand}}\\
&= (1-\pmin)^k\tag{by $\pr[\candidate_i\in\scs,b_i\in K_i]=1$}
\end{align*}\QED
\end{proof}

Since the number of candidates can only be bounded with some knowledge of the state space, e.g.\ its size, we assume no bounds and provide a method to bound the error even for an unbounded number of candidates on a run.

\begin{lemma}\label{lem:more-cand}
For $(k_i)_{i=1}^\infty\in\Nset^\Nset$, let 
$\mathcal E\mathit{rr}$ be the set of runs such that for some $i\in\Nset$, we have $\skcand {k_i}(\candidate_i)$ despite $\candidate_i\notin\bscc$.
Then $$\displaystyle\pr[\mathcal E\mathit{rr}]<\sum_{i=1}^\infty (1-\pmin)^{k_i}\,.$$
\end{lemma}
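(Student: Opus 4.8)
The plan is to turn the single-candidate estimate of Lemma~\ref{lem:ith-cand} into a bound over the whole (a priori unbounded) sequence of candidates by a union bound. For each $i\in\Nset$ let $\mathcal E\mathit{rr}_i$ be the event that $\skcand{k_i}(\candidate_i)$ holds even though $\candidate_i\notin\bscc$, so that by definition $\mathcal E\mathit{rr}=\bigcup_{i\ge 1}\mathcal E\mathit{rr}_i$. Countable subadditivity of $\pr$ then immediately gives
$$\pr[\mathcal E\mathit{rr}]\;\le\;\sum_{i=1}^\infty \pr[\mathcal E\mathit{rr}_i].$$

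Next I would estimate each summand. Writing $\pr[\mathcal E\mathit{rr}_i]=\pr[\skcand{k_i}(\candidate_i),\,\candidate_i\notin\bscc]$ and conditioning on $\candidate_i\notin\bscc$ (when $\pr[\candidate_i\notin\bscc]=0$ the summand simply vanishes), Lemma~\ref{lem:ith-cand} yields
$$\pr[\mathcal E\mathit{rr}_i]=\pr[\skcand{k_i}(\candidate_i)\mid \candidate_i\notin\bscc]\cdot\pr[\candidate_i\notin\bscc]\;\le\;(1-\pmin)^{k_i}\cdot\pr[\candidate_i\notin\bscc].$$
Bounding the last factor by $1$ already delivers the non-strict version $\pr[\mathcal E\mathit{rr}]\le\sum_{i=1}^\infty(1-\pmin)^{k_i}$, which is the routine content of the lemma.

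The remaining point, and the one I expect to be the real obstacle, is upgrading $\le$ to the strict $<$ claimed in the statement: the union bound together with the per-term estimate only produces $\le$, so the argument must certify that the right-hand side carries at least one positive term that is absent on the left. Here I would invoke Example~\ref{ex:eigen}, namely that almost surely a run has at most $2|\St|-1$ candidates, together with the fact that the final candidate is almost surely a BSCC. Consequently, for every $i\ge 2|\St|$ we have $\candidate_i=\candidate_j\in\bscc$ almost surely, so $\pr[\candidate_i\notin\bscc]=0$ and hence $\pr[\mathcal E\mathit{rr}_i]=0$. Since $\pmin<1$ forces $(1-\pmin)^{k_i}>0$, these infinitely many strictly positive terms appear in $\sum_i(1-\pmin)^{k_i}$ but contribute nothing to $\sum_i\pr[\mathcal E\mathit{rr}_i]$; comparing the two series termwise then gives the strict inequality. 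In the degenerate case $\sum_i(1-\pmin)^{k_i}=\infty$ the strict bound holds trivially because $\pr[\mathcal E\mathit{rr}]\le 1$, so the only substantive case is a convergent right-hand side, exactly where the vanishing tail terms make the gap strict.

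In short, the decomposition plus Lemma~\ref{lem:ith-cand} is mechanical; the care lies entirely in justifying strictness, which rests on the almost-sure finiteness of the candidate sequence and the almost-sure BSCC-property of the final candidate rather than on any sharpening of the per-candidate estimate.
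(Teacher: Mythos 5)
Your proposal is correct, and its core---decomposing $\mathcal E\mathit{rr}$ into the per-candidate events, applying countable subadditivity, and bounding each summand by splitting off the factor $\pr[\candidate_i\notin\bscc]\leq 1$ and invoking Lemma~\ref{lem:ith-cand}---is exactly the paper's argument, step for step. Where you genuinely go beyond the paper is the strictness of the inequality: the paper's own chain of estimates only delivers the non-strict $\leq$ (its last line even writes an equality ``by Lemma~\ref{lem:ith-cand}'' although that lemma is itself an upper bound), and the claimed $<$ is left unjustified there. Your completion is sound: since every run has at most $2|\St|-1$ candidates (Example~\ref{ex:eigen}) and the final candidate is almost surely a BSCC, the convention $\candidate_\ell:=\candidate_j$ for $\ell>j$ gives $\pr[\candidate_i\notin\bscc]=0$, hence $\pr[\mathcal E\mathit{rr}_i]=0$, for all $i\geq 2|\St|$, while the corresponding terms $(1-\pmin)^{k_i}$ on the right are strictly positive; together with your separate treatment of a divergent right-hand side this legitimately upgrades $\leq$ to $<$. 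One pedantic caveat: strict positivity of $(1-\pmin)^{k_i}$ requires $\pmin<1$. In the degenerate deterministic case $\pmin=1$ both sides are zero and the strict inequality fails (no non-BSCC set can then ever be a strong $k$-candidate), so the lemma implicitly assumes $\pmin<1$; your proof makes that assumption explicit, which is the right call.
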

\begin{proof}
\begin{align*}
\pr[\mathcal E\mathit{rr}]
&=\pr\left[\bigcup_{i=1}^{\infty} \Big( \skcand {k_i}(\candidate_i) \cap \candidate_i\notin\bscc\Big)\right]\\
&\leq\sum_{i=1}^{\infty}\pr[\skcand {k_i}(\candidate_i) \cap \candidate_i\notin\bscc] \tag{by the union bound}\\
&=\sum_{i=1}^{\infty}\pr[\skcand {k_i}(\candidate_i) \mid \candidate_i\notin\bscc]\cdot\pr[\candidate_i\notin\bscc]\\
&\leq\sum_{i=1}^{\infty}\pr[\skcand {k_i}(\candidate_i) \mid \candidate_i\notin\bscc]\\
&=\sum_{i=1}^{\infty}(1-\pmin)^{k_i} \tag{by Lemma~\ref{lem:ith-cand}}
\end{align*}
\QED
\end{proof}

In Algorithm~\ref{alg:cand} we present a procedure for deciding whether a BSCC inferred from a path $\path$ is indeed a BSCC with confidence greater than $1-\delta$.
We use notation $\textsc{SCand}_{k_i}(\candidate,\pi)$ to denote the function deciding whether $\candidate$ is a strong $k_i$-candidate on $\pi$. 
%
The overall error bound is obtained by setting $k_i=\frac{i-\log \delta}{-\log(1-\pmin)}$.
\mybigspace
\begin{algorithm}[t]
  \caption{\textsc{ReachedBSCC}}
  \label{alg:cand}
\begin{algorithmic}
\Require path $\path=s_0s_1\cdots s_n$, $\pmin,\delta\in(0,1]$
\Ensure $\True$ iff $\candidate(\pi)\in \bscc$
\State $C \gets \bot$, $i \gets 0$ 
\For {$j=0$ to $n$}
\If {$\candidate(s_0\cdots s_j) \neq \bot$ and $\candidate(s_0\cdots s_j) \neq C$ }
\State $C \gets \candidate(s_0\cdots s_j)$
\State $i \gets i+1$
\EndIf
\EndFor
\State $k_i\gets\frac{i-\log \delta}{-\log(1-\pmin)}$ 
\If {$i \geq 1$ and $\textsc{SCand}_{k_i}(\candidate(\pi),\pi)$}
\Return $\True$
\Else\ \Return $\False$
\EndIf
\end{algorithmic}
\end{algorithm}

\begin{theorem}\label{thm:main}
For every $\delta>0$, Algorithm~\ref{alg:cand} is correct with error probability at most $\delta$.
\end{theorem}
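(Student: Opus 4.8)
The plan is to identify the only way Algorithm~\ref{alg:cand} can err with the event $\mathcal E\mathit{rr}$ of Lemma~\ref{lem:more-cand}, and then to verify that the prescribed $k_i=\frac{i-\log\delta}{-\log(1-\pmin)}$ forces the bound of that lemma down to at most $\delta$.

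First I would pin down what ``incorrect'' means. \textsc{ReachedBSCC} returns $\True$ only when $i\geq1$ and the final candidate $\candidate(\pi)$ is a strong $k_i$-candidate, so the sole error it can commit is a false positive: returning $\True$ while $\candidate(\pi)\notin\bscc$. (Returning $\False$ when $\candidate(\pi)$ happens to already be a BSCC is not an error in this sense but merely a not-yet-confident verdict, which a longer prefix resolves, since the final candidate of a run is almost surely a BSCC and eventually becomes a strong $k$-candidate for every $k$.) The for-loop computes exactly the index $i$ of the current candidate in the subsequence $(\candidate_\ell)_\ell$, i.e.\ $\candidate(\pi)=\candidate_i$, and the concluding test is precisely $\skcand{k_i}(\candidate_i)$. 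Hence the algorithm errs on a prefix iff $\skcand{k_i}(\candidate_i)$ holds while $\candidate_i\notin\bscc$, and taking the union over all prefixes of a run (equivalently, over all $i$) yields exactly $\mathcal E\mathit{rr}$ for this sequence $(k_i)$.

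Second, I would apply Lemma~\ref{lem:more-cand} to obtain $\pr[\mathcal E\mathit{rr}]<\sum_{i=1}^\infty(1-\pmin)^{k_i}$ and evaluate the series. Substituting the value of $k_i$ and simplifying the exponent gives $(1-\pmin)^{k_i}=\delta\,c^{-i}$, where $c$ is the base of the logarithm; for any $c\geq2$ (in particular $c=e$) one has $\sum_{i\geq1}c^{-i}=\frac1{c-1}\leq1$, so $\pr[\mathcal E\mathit{rr}]<\frac{\delta}{c-1}\leq\delta$, which is the claim.

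The one step I would be careful about --- and the only genuine subtlety --- is the bookkeeping tying the finite-path view of the algorithm to the run-level event $\mathcal E\mathit{rr}$: I must check that evaluating $\textsc{SCand}_{k_i}$ on the whole prefix $\pi$ coincides with the predicate $\skcand{k_i}(\candidate_i)$ of the definitions, so that Lemmas~\ref{lem:ith-cand} and~\ref{lem:more-cand} apply verbatim, and that invoking the test on every prefix does not inflate the error beyond $\pr[\mathcal E\mathit{rr}]$, since strong $k_i$-candidacy is monotone in the prefix and the union over prefixes is already subsumed by $\mathcal E\mathit{rr}$. Everything else is the routine geometric-series computation above.
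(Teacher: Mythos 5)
Your proof is correct and takes essentially the same route as the paper's: identify the algorithm's one-sided error with the event $\mathcal E\mathit{rr}$ of Lemma~\ref{lem:more-cand}, substitute $k_i=\frac{i-\log\delta}{-\log(1-\pmin)}$, and sum the resulting geometric series. The only differences are cosmetic --- the paper fixes the logarithm base to $2$ so the series sums to exactly $\delta$, while you allow any base $c\geq 2$ and settle for $\delta/(c-1)\leq\delta$, and your explicit bookkeeping tying the per-prefix test $\textsc{SCand}_{k_i}$ to the run-level event $\mathcal E\mathit{rr}$ makes precise a step the paper leaves implicit.
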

\begin{proof}
Since $M$ is finite, the Algorithm~\ref{alg:cand} terminates almost surely. 
The probability to return an incorrect result can be bounded by returning incorrect result for one of the non-final candidates, 
which by Lemma~\ref{lem:more-cand} is as follows: 
\[
\sum_{i=1}^\infty (1-\pmin)^{k_i}
= \sum_{i=1}^\infty (1-\pmin)^\frac{-i+\log \delta}{\log(1-\pmin)}
= \sum_{i=1}^\infty 2^{-i+\log \delta}
= \sum_{i=1}^\infty \delta/2^i
= \delta.
\]\QED
\end{proof}

We have shown how to detect a BSCC of a single path with desired confidence.
In Algorithm \ref{alg:reach}, we show how to use our BSCC detection method to decide whether a given path 
reaches the set $G$ with confidence $1-\delta$.
The function $\nextState(\path)$ randomly picks a state according to $\init$ if the path is empty ($\path=\emptypath$); otherwise, if $\ell$ is the last state of $\path$, it randomly chooses its successor according to $\Pm(\ell,\cdot)$.
The algorithm returns $\True$ when $\path$ reaches a state in $G$, 
and $\False$ when for some $i$, the $i$th candidate is a strong $k_i$-candidate.
In the latter case, with probability at least $1-\delta$, $\path$ has reached a BSCC not containing $G$.
%
%
Hence, with probability at most $\delta$, the algorithm returns $\False$ for a path that could reach a goal.
\mybigspace
\begin{algorithm}[b]
  \caption{\textsc{SinglePathReach}}
  \label{alg:reach}
\begin{algorithmic}
\Require goal states $G$ of $\Mc$, $\pmin,\delta\in(0,1]$
\Ensure $\True$ iff a run reaches $G$ 
\State $\path \gets \emptypath$
\Repeat
\State $s \gets \nextState(\path)$
\State $\path \gets \path \concat s$
\If  {$s\in G$}
\Return $\True$ \Comment We have provably reached $G$
\EndIf
\Until \textsc{ReachedBSCC}($\path,\pmin,\delta$) 
\State \Return $\False$ \Comment By Theorem~\ref{thm:main}, $\pr[\candidate(\path) \in \bscc] \geq 1 - \delta$
\end{algorithmic}
\end{algorithm}



\subsection{Hypothesis testing on a Bernoulli variable observed with bounded error}
\label{lab:ht}

In the following, we show how to estimate the probability of reaching a set of goal states, 
by combining the BSCC detection and hypothesis testing. 
More specifically, we sample many paths of a Markov chain, decide for each whether it reaches the goal states (Algorithm \ref{alg:reach}), and then use hypothesis testing to estimate the event probability. 
The hypothesis testing is adapted to the fact that testing reachability on a single path may report false negatives.

%

Let $X_\reach^\delta$ be a Bernoulli random variable, such that $X_\reach^\delta=1$ if and only if \textsc{SinglePathReach}$(G,\pmin,\delta)=\True$,
describing the outcome of Algorithm~\ref{alg:reach}.
The following theorem establishes that $X_\reach^\delta$ estimates $\pr[\reach G]$ with a bias bounded by $\delta$.

\begin{theorem}
\label{thm:reach_exp}
For every $\delta>0$, we have $\pr[\reach G]-\delta\leq\expected[X_\reach^\delta]\leq\pr[\reach G]$.
\end{theorem}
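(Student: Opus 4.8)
The plan is to show the two inequalities separately by analysing the possible outcomes of \textsc{SinglePathReach}. Recall that $X_\reach^\delta=1$ exactly when the algorithm returns $\True$, which happens only when the path provably reaches a state in $G$. This is the key observation for the upper bound: the algorithm never returns $\True$ for a run that does not reach $G$, so the event $\{X_\reach^\delta=1\}$ is contained in the event $\reach G$. Hence $\expected[X_\reach^\delta]=\pr[X_\reach^\delta=1]\leq\pr[\reach G]$, giving the right-hand inequality immediately with no appeal to the confidence analysis.

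For the lower bound I would decompose $\reach G$ according to what the algorithm does on such a run. Fix a run $\run\in\reach G$. On this run the algorithm can do one of two things: either it eventually reaches $G$ and returns $\True$, or it halts earlier via \textsc{ReachedBSCC} returning $\True$ and reports $\False$. The only way $X_\reach^\delta=0$ while $\run\in\reach G$ is the second case, i.e.\ the algorithm declares that a BSCC has been reached even though the run still reaches $G$. I would argue that this false-negative event is contained in the error event $\mathcal E\mathit{rr}$ of Lemma~\ref{lem:more-cand}: if the true run reaches $G$ but the algorithm stops because some candidate $\candidate_i$ became a strong $k_i$-candidate, then that $\candidate_i$ cannot be a BSCC (a BSCC containing the detected candidate would trap the run and prevent it reaching $G$ outside the candidate), so this is precisely a run on which a non-BSCC candidate was falsely certified.

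Concretely, writing the bad event as $\reach G\cap\{X_\reach^\delta=0\}$, I would show $\reach G\cap\{X_\reach^\delta=0\}\subseteq\mathcal E\mathit{rr}$, and then invoke Theorem~\ref{thm:main} (equivalently the bound from Lemma~\ref{lem:more-cand} with the chosen $k_i=\frac{i-\log\delta}{-\log(1-\pmin)}$) to conclude $\pr[\reach G\cap\{X_\reach^\delta=0\}]\leq\delta$. Combining with $\pr[\reach G]=\pr[\reach G\cap\{X_\reach^\delta=1\}]+\pr[\reach G\cap\{X_\reach^\delta=0\}]$ and the containment $\{X_\reach^\delta=1\}\subseteq\reach G$ from the first part yields
\[
\expected[X_\reach^\delta]=\pr[X_\reach^\delta=1]=\pr[\reach G\cap\{X_\reach^\delta=1\}]=\pr[\reach G]-\pr[\reach G\cap\{X_\reach^\delta=0\}]\geq\pr[\reach G]-\delta,
\]
which is the left-hand inequality.

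The main obstacle I anticipate is making the containment $\reach G\cap\{X_\reach^\delta=0\}\subseteq\mathcal E\mathit{rr}$ fully rigorous. This requires carefully matching the candidate bookkeeping inside Algorithm~\ref{alg:reach} (which increments $i$ and recomputes $k_i$ online as new candidates appear) with the indexed sequence $(\candidate_i)$ and thresholds $(k_i)$ used in Lemma~\ref{lem:more-cand}, and verifying that the specific $k_i$ hard-coded in the algorithm agrees with the sequence feeding the union bound. I also need the almost-sure termination noted in the proof of Theorem~\ref{thm:main} so that the decomposition of $\reach G$ into the two outcomes is exhaustive up to a null set; once that is in place, the probabilistic estimates are a direct application of the already-established error bound.
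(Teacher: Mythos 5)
Your proposal is correct and follows essentially the same route as the paper: the upper bound via the containment $\{X_\reach^\delta=1\}\subseteq\reach G$, and the lower bound via the decomposition $\pr[X_\reach^\delta=1]=\pr[\reach G]-\pr[\reach G, X_\reach^\delta=0]$ together with the $\delta$-bound on falsely certifying a non-BSCC candidate, which the paper cites as ``Theorem~\ref{thm:main} and the definition of BSCC'' and you unwind explicitly through Lemma~\ref{lem:more-cand}. Your extra care in showing that a run in $\reach G$ stopped with $\False$ must have certified a non-BSCC candidate (since a true BSCC, all of whose states were visited and found outside $G$, would trap the run) is precisely the implicit content of the paper's appeal to the definition of BSCC, so no gap remains.
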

\begin{proof}
Since the event $\reach G$ is necessary for $X_\reach^\delta=1$, we have $\pr[\reach G\mid X_\reach^\delta=1]=1$.
Therefore, $\pr[X_\reach^\delta=1]=\pr[\reach G,X_\reach^\delta=1]\leq\pr[\reach G]$, hence the upper bound.
As for the lower bound, again
$\pr[X_\reach^\delta=1]=\pr[\reach G,X_\reach^\delta=1]=\pr[\reach G]-\pr[\reach G,X_\reach^\delta=0]\geq \pr[\reach G]-\delta$, where the last inequality follows by Theorem~\ref{thm:main} and the definition of BSCC.
%
\QED\end{proof}

%

In order to conclude on the value $\pr[\reach G]$, the standard statistical model checking approach via hypothesis testing~\cite{Younes02} decides between the hypothesis
$ H_0: \pr[\reach G]\geq p + \varepsilon$ and $H_1: \pr[\reach G]\leq p - \varepsilon$,
where $\varepsilon$ is a desired indifference region.
As we do not have precise observations on each path,
we reduce this problem to a hypothesis testing on the variable $X_\reach^\delta$ with a narrower indifference region: $ H_0': \expected[X_\reach^\delta] \geq p+(\varepsilon - \delta)$ and $H_1': \expected[X_\reach^\delta]\leq p - \varepsilon$, for some $\delta<\varepsilon$.

%
We define the reduction simply as follows. Given a statistical test $\mathcal{T'}$ for $H'_0, H'_1$ we define a test $\mathcal{T}$ that accepts $H_0$ if  $\mathcal{T}'$ accepts $H'_0$, and $H_1$ otherwise. 
The following lemma shows that $\mathcal{T}$ has the same strength as $\mathcal{T}'$.

\begin{lemma}
\label{lem:red_reach}
Suppose the test $\mathcal{T'}$  decides between $H_0'$ and $H_1'$ with strength $(\alpha, \beta)$.
Then the test $\mathcal{T}$ decides  between $H_0$ with $H_1$ with strength $(\alpha, \beta)$.
\end{lemma}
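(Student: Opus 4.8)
The plan is to show that the two hypotheses $H_0,H_1$ (phrased in terms of the true probability $\pr[\reach G]$) are each implied by the intended acceptance regions of their primed counterparts $H_0',H_1'$ (phrased in terms of $\expected[X_\reach^\delta]$), so that every correctness guarantee that $\mathcal{T}'$ enjoys on $H_0',H_1'$ transfers verbatim to $\mathcal{T}$. The only ingredient I need is the bias bound from Theorem~\ref{thm:reach_exp}, namely $\pr[\reach G]-\delta\leq\expected[X_\reach^\delta]\leq\pr[\reach G]$.

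First I would establish the two region implications. If $H_0$ holds, i.e.\ $\pr[\reach G]\geq p+\varepsilon$, then the lower bound of Theorem~\ref{thm:reach_exp} gives $\expected[X_\reach^\delta]\geq\pr[\reach G]-\delta\geq p+\varepsilon-\delta=p+(\varepsilon-\delta)$, which is exactly $H_0'$. Symmetrically, if $H_1$ holds, i.e.\ $\pr[\reach G]\leq p-\varepsilon$, then the upper bound gives $\expected[X_\reach^\delta]\leq\pr[\reach G]\leq p-\varepsilon$, which is exactly $H_1'$. The shifted threshold $p+(\varepsilon-\delta)$ is chosen precisely to absorb the $\delta$ underestimation bias of $X_\reach^\delta$, while the $H_1'$ bound $p-\varepsilon$ needs no correction since $X_\reach^\delta$ never overestimates.

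Next I would translate these implications into the two error bounds that define strength $(\alpha,\beta)$. By construction $\mathcal{T}$ accepts $H_0$ exactly when $\mathcal{T}'$ accepts $H_0'$, and accepts $H_1$ exactly when $\mathcal{T}'$ accepts $H_1'$. For the type-I error, assume $H_0$ is true; then $H_0'$ is true by the first implication, so $\mathcal{T}'$ wrongly accepts $H_1'$ with probability at most $\alpha$, and hence $\mathcal{T}$ wrongly accepts $H_1$ with probability at most $\alpha$. For the type-II error, assume $H_1$ is true; then $H_1'$ is true, so $\mathcal{T}'$ wrongly accepts $H_0'$ with probability at most $\beta$, and hence so does $\mathcal{T}$. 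Together these give $\mathcal{T}$ strength $(\alpha,\beta)$.

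The argument is essentially a change of variables on the hypothesis regions, and I expect no genuine obstacle. The sole place requiring mild care---and the only place the standing assumption $\delta<\varepsilon$ is used---is checking that the primed indifference region $[\,p-\varepsilon,\;p+(\varepsilon-\delta)\,]$ is still a well-formed indifference region (nonempty, with the two hypotheses disjoint), which is immediate from the explicit numbers above. The heavy lifting, namely bounding the bias of the imperfect observation $X_\reach^\delta$, was already carried out in Theorem~\ref{thm:reach_exp}; the present lemma is purely a bookkeeping reduction between the two testing problems.
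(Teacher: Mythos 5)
Your proof is correct and follows essentially the same route as the paper's: both invoke the bias bound of Theorem~\ref{thm:reach_exp} to show that $H_0$ implies $H_0'$ and $H_1$ implies $H_1'$, and then transfer the type~I and type~II error bounds of $\mathcal{T}'$ to $\mathcal{T}$ via the definition of the reduction. The only differences are cosmetic: the paper dismisses the type~II case as ``analogous'' while you spell it out, and you additionally record that $\delta<\varepsilon$ keeps the primed indifference region well-formed, an observation the paper leaves implicit.
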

\begin{proof}
Consider type I error of $\mathcal{T}$.
Assume that $H_0$ holds, which means   $\pr[\reach G]\geq p + \varepsilon$.
By Theorem \ref{thm:reach_exp} it follows that  $\pr[X_\reach^\delta=1]\geq \pr[\reach G]-\delta \geq p+(\varepsilon - \delta)$, thus $H'_0$ also holds.
By assumption the test $\mathcal{T'}$ accepts $H'_1$ with probability at most $\alpha$, thus, by the reduction, $\mathcal{T}$ also accepts $H_1$ with probability $\leq \alpha$.
The proof for type II error is analogous.
\QED
\end{proof}

Lemma \ref{lem:red_reach} gives us the following algorithm to decide between $H_0$ and $H_1$.
We generate samples $x_0, x_1, \cdots, x_n \sim X_\reach^\delta$ from \textsc{SinglePathReach}$(G,\pmin,\delta)$, and apply a statistical test to decide between $H'_0$ and $H'_1$.
Finally, we accept $H_0$ if $H'_0$ was accepted by the test, and $H_1$ otherwise.
In our implementation, we used the sequential probability ration test (SPRT) \cite{YounesThesis,wald1945sequential} for hypothesis testing. 

\section{Extensions}
\label{sec:extens}

In this section, we present how the on-the-fly BSCC detection can be used for verifying LTL and quantitative properties (mean payoff).

\subsection{Linear temporal logic}

We show how our method extends to properties expressible by linear temporal logic (LTL)~\cite{pnueli1977temporal} and, in the same manner, to all $\omega$-regular properties. 
Given a finite set $Ap$ of atomic propositions,
a \emph{labelled Markov chain} (LMC) is a tuple $\Mc = (\St, \Pm, \init, Ap, \Lab)$, where $(\St,\Pm, \init)$ is a MC and $\Lab : \St \to 2^{Ap}$ is a labelling function. 
The formulae of LTL are given by the following syntax:
\begin{align*}
\varphi~::=~ &  a\mid \neg \varphi \mid \varphi \wedge \varphi  \mid \X\varphi \mid \varphi\U\varphi
\end{align*} 
for $a\in Ap$.
The semantics is defined with respect to a word $w\in (2^{Ap})^\omega$. 
The $i$th letter of $w$ is denoted by $w[i]$, i.e.~$w=w[0]w[1]\cdots$ and we write $w_i$ for the  suffix $w[i] w[i+1] \cdots.$
We define

\[
\begin{array}[t]{lclclcl}
  w \models a & \iff & a \in w[0] \\
  w \models \neg \varphi & \iff & \text{not } w \models \varphi \\
  w \models \varphi \wedge \psi & \iff & w \models \varphi \text{ and } w \models \psi\\
  w \models \X \varphi & \iff & w_1 \models \varphi \\
  w \models \varphi\U \psi & \iff &
\exists \, k\in\Nset: w_k \models \psi \text{ and } 
\forall\, 0\leq j < k: w_j\models \varphi
\end{array}
\]
The set $\{w\in (\draAl)^\omega\mid w\models\varphi\}$ is denoted by $\mathsf L(\varphi)$.

Given a labelled Markov chain $\Mc$ and an LTL formula $\varphi$, we are interested in the measure $\pr[\Mc \models \varphi] := \pr[\{\run\in\runs \mid \Lab(\run) \models \varphi\}],$
where $\Lab$ is naturally extended to runs by $\Lab(\run)[i]=\Lab(\run[i])$ for all $i$.  

For every LTL formula $\varphi$, one can construct a \emph{ deterministic Rabin automaton} (DRA) $\dra = (\draS, \draAl, \draTr, \draInit, \draAcc)$ that accepts all runs that satisfy $\varphi$~\cite{BK08}. 
Here $\draS$ is a finite set of states,
$\draTr : \draS \times \draAl \to \draS$ is the transition function,
$\draInit \in \draS$ is the initial state, and
$\draAcc \subseteq 2^\draS \times 2^\draS$ is the acceptance
condition.  A word $w\in (\draAl)^\omega$ induces an infinite sequence
$\dra(w)=s_0 s_1 \cdots\in \draS^\omega$, such that $s_0 = q_0$ and
$\draTr(s_i,w[i])= s_{i+1}$ for $i\geq 0$.  We write $\Inf(w)$ for the
set of states that occur infinitely often in $\dra(w)$.  Word $w$ is
accepted, if there exists a pair $(E,F)\in \draAcc$, such that
$E\cap \Inf(w) = \emptyset$ and $F\cap \Inf(w) \neq \emptyset$. The
language $\mathsf L(\dra)$ of $\dra$ is the set of all words accepted
by $\dra$. The following is a well known result, see e.g.~\cite{BK08}.

\begin{lemma}
For every LTL formula $\varphi$, a DRA $\dra$ can be effectively constructed such that $\mathsf L(\dra)=\mathsf L(\varphi)$.
\end{lemma}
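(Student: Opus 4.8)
The plan is to factor the construction into two classical stages: first translate $\varphi$ into a nondeterministic B\"uchi automaton (NBA) recognising $\mathsf L(\varphi)$, and then determinise this NBA into a deterministic Rabin automaton. The detour through the Rabin condition is unavoidable, since B\"uchi automata are in general not determinisable while preserving the B\"uchi acceptance condition, but can be determinised at the cost of moving to the richer Rabin acceptance.

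For the first stage I would use the standard Vardi--Wolper tableau construction. Take the closure $\mathsf{cl}(\varphi)$ consisting of all subformulae of $\varphi$ together with their negations (identifying $\neg\neg\psi$ with $\psi$). The states of the automaton are the \emph{atoms}: maximal subsets $A\subseteq\mathsf{cl}(\varphi)$ that are propositionally consistent and locally coherent, meaning that $\psi_1\wedge\psi_2\in A$ iff $\psi_1\in A$ and $\psi_2\in A$, that exactly one of $\psi,\neg\psi$ lies in $A$, and that the expansion law $\psi_1\U\psi_2\in A$ iff $\psi_2\in A$ or ($\psi_1\in A$ and $\X(\psi_1\U\psi_2)\in A$) holds. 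The letter read in an atom $A$ is its propositional part $A\cap Ap$, and there is a transition from $A$ to $A'$ precisely when $\X\psi\in A$ iff $\psi\in A'$ for every $\X\psi\in\mathsf{cl}(\varphi)$. A generalised B\"uchi condition with one accepting set per until subformula enforces that no eventuality $\psi_1\U\psi_2$ is postponed forever, and a standard counter construction converts this generalised B\"uchi automaton into an ordinary NBA. One verifies by induction on $\varphi$ that a word is accepted iff it satisfies $\varphi$, so the resulting NBA recognises $\mathsf L(\varphi)$.

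For the second stage I would apply Safra's determinisation. Since a deterministic B\"uchi automaton cannot capture every $\omega$-regular language, the plain subset construction is insufficient; Safra's construction augments it with \emph{Safra trees}, whose nodes carry nested subsets of NBA states recording how accepting states are revisited along a run. The determinised automaton reads a letter by updating all tree nodes simultaneously, and its acceptance pairs $(E,F)\in\draAcc$ are indexed by nodes: $F$ collects the steps at which a node is refreshed (its accepting obligations are met and it survives), while $E$ collects the steps at which that node dies, so that $(E,F)$ witnesses a node that eventually stabilises and is refreshed infinitely often. This yields a deterministic Rabin automaton $\dra=(\draS,\draAl,\draTr,\draInit,\draAcc)$ accepting exactly the language of the NBA. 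Composing the two effective stages gives $\mathsf L(\dra)=\mathsf L(\varphi)$.

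The main obstacle is the determinisation step: correctness of Safra's construction is delicate, because one must prove simultaneously that every word accepted by the NBA induces a tree node that is refreshed infinitely often while surviving from some point on (witnessing some pair in $\draAcc$), and conversely that any such node forces the existence of an accepting run of the nondeterministic automaton. The tableau translation of the first stage, by contrast, is a routine induction on the structure of $\varphi$ once the expansion laws and the accepting sets for eventualities are set up correctly.
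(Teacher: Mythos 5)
Your proposal is correct and matches the paper's treatment: the paper gives no proof of this lemma, citing it as a well-known result from the standard reference (Baier--Katoen), whose construction is exactly your two-stage route of a Vardi--Wolper tableau translation from LTL to a (generalised) B\"uchi automaton followed by Safra determinisation into a Rabin automaton. Your sketch of both stages, including the acceptance pairs indexed by Safra-tree nodes, is the classical argument, so nothing further is needed.
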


Further, the product of 
a MC $\Mc$ and DRA $\mathcal A$ is the Markov chain $\mathcal{M}\otimes \mathcal{A}=(\St \times \draS, \Pm', \init')$, where $\Pm'((s,q),(s',q')) = \Pm(s,s')$ if $q'=\draTr(q,\Lab(s'))$ and $\Pm'((s,q),(s',q'))=0$ otherwise,  and $\init'(s,q) = \init(s)$ if $\draTr(\draInit, \Lab(s))=q$ and $\init'(s,q)=0$ otherwise. 
Note that $\mathcal{M}\otimes \mathcal{A}$ has the same smallest transition probability $\pmin$ as $\mathcal{M}$.

The crux of LTL probabilistic model checking relies on the fact that the probability of satisfying an LTL property $\varphi$ in a Markov chain $\Mc$ equals the probability of reaching an accepting BSCC in the Markov chain $\Mc \otimes \mathcal{A}_\varphi$. 
Formally, a BSCC $C$ of $\Mc \otimes \mathcal{A}_\varphi$ is \emph{accepting} if for some $(E,F)\in\draAcc$ we have $C\cap (S\times E)=\emptyset$ and $C\cap (S\times F)\neq\emptyset$. Let $\mathsf{AccBSCC}$ denote the union of all accepting BSCCs in $\Mc$. Then we obtain the following well-known fact \cite{BK08}:

\begin{lemma}
For every labelled Markov chain $\Mc$ and LTL formula $\varphi$, we have 
$\pr[\Mc \models \varphi] = \pr[\reach \mathsf{AccBSCC}]$.
\end{lemma}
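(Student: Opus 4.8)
The plan is to reduce the probabilistic LTL question on $\Mc$ to a reachability question on the product chain $\Mc \otimes \dra$ (for $\dra = \dra_\varphi$), exploiting the determinism of $\dra$ together with the almost-sure convergence of runs to BSCCs already recorded in Section~\ref{sec:prelim}. The event $\reach \mathsf{AccBSCC}$ is read in the product chain, so the first task is to transport probabilities between $\Mc$ and $\Mc \otimes \dra$.

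First I would establish a measure-preserving correspondence between runs of $\Mc$ and runs of $\Mc \otimes \dra$. Because $\dra$ is deterministic, every run $\run = s_0 s_1 \cdots$ of $\Mc$ lifts to a \emph{unique} run $\widehat{\run} = (s_0, q_1)(s_1, q_2)\cdots$ of the product, whose second components are exactly the states $\dra$ traverses while reading $\Lab(\run)$ (with a one-step offset coming from the definition of $\init'$, which already applies $\draTr(\draInit, \Lab(s_0))$). The definitions of $\Pm'$ and $\init'$ make the projection onto the first coordinate a measure isomorphism: matching cylinders satisfy $\pr[\cone(\widehat{\path})] = \pr[\cone(\path)]$, since each automaton move has conditional probability $1$ given the underlying $\Mc$-move. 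Consequently $\pr[\Mc \models \varphi]$ equals the product-chain probability of the set of runs $\widehat\run$ whose label word $\Lab(\run)$ is accepted by $\dra$.

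Next I would translate the Rabin acceptance condition into the accepting-BSCC condition. By the fact recalled in Section~\ref{sec:prelim}, with probability $1$ the set of states a product run visits infinitely often is a BSCC $C$; moreover the set $\Inf(\Lab(\run))$ of automaton states seen infinitely often is exactly the projection of $C$ onto the $\dra$-component (the finite offset between $\widehat\run$ and $\dra(\Lab(\run))$ is irrelevant for infinitely-often behaviour). For any Rabin pair $(E,F)$, the condition $E \cap \Inf(\Lab(\run)) = \emptyset$ is then equivalent to $C \cap (S \times E) = \emptyset$, and $F \cap \Inf(\Lab(\run)) \neq \emptyset$ is equivalent to $C \cap (S \times F) \neq \emptyset$. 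Hence $\Lab(\run)$ is accepted by $\dra$ if and only if the BSCC $C$ in which $\widehat\run$ settles is accepting, i.e.\ $C \subseteq \mathsf{AccBSCC}$.

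Putting these together: almost every run reaches some BSCC and thereafter stays within it, so the event ``$\widehat\run$ eventually enters $\mathsf{AccBSCC}$'' coincides almost surely with ``the BSCC of $\widehat\run$ is accepting,'' which by the previous paragraph coincides with ``$\Lab(\run) \models \varphi$.'' Transporting this back along the measure isomorphism yields $\pr[\Mc \models \varphi] = \pr[\reach \mathsf{AccBSCC}]$. The only genuinely delicate ingredients are the almost-sure convergence to a BSCC and the measure-preservation of the product construction; both are standard and the former is already stated in the preliminaries, so the argument is mainly a careful assembly of these facts, with attention to the index offset in the automaton component.
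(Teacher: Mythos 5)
Your proof is correct and is precisely the standard argument: the paper itself gives no proof of this lemma, stating it as a well-known fact with a pointer to \cite{BK08}, and your three ingredients---the measure-preserving lift of runs of $\Mc$ to runs of $\Mc\otimes\dra$ justified by the determinism of $\dra$ (with the one-step index offset correctly noted as irrelevant to $\Inf$), the almost-sure settlement of product runs in a BSCC all of whose states are visited infinitely often, and the translation of each Rabin pair $(E,F)$ into the conditions $C\cap(S\times E)=\emptyset$ and $C\cap(S\times F)\neq\emptyset$---are exactly the steps of the textbook proof the paper is invoking. No gap remains, so your assembly of the preliminaries' facts fully substantiates the lemma.
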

\mybigspace
\begin{algorithm}[H]
  \caption{\textsc{SinglePathLTL}}
  \label{alg:ltl}
\begin{algorithmic}
\Require DRA $\dra = (\draS, \draAl, \draTr, \draInit, \draAcc)$, $\pmin,\delta\in(0,1]$
\Ensure $\True$ iff the final candidate is an accepting BSCC
\State $q\gets\draInit$, $\path \gets \emptypath$ 
\Repeat
\State $s \gets \nextState(\path)$
\State $q \gets \draTr(q, \Lab(s))$
\State $\path \gets \path \concat (s,q)$
\Until \textsc{ReachedBSCC}($\path,\pmin,\delta$)  \Comment $\pr[\candidate(\path) \in \bscc] \geq 1 - \delta$
\State \Return $\exists (E,F)\in \draAcc : \candidate(\path)\cap (S\times E) = \emptyset \land \candidate(\path)\cap (S\times F) \neq \emptyset$  
\end{algorithmic}
\end{algorithm}
\mybigspace

Since the input used is a Rabin automaton, the method applies to all $\omega$-regular properties.
Let $X^\delta_\varphi$ be a Bernoulli random variable, such that $X_\varphi^\delta=1$ if and only if \textsc{SinglePathLTL}$(\dra_\varphi,\pmin,\delta)=\True$.
Since the BSCC must be reached and fully explored to classify it correctly, the error of the algorithm can now be both-sided.

\begin{theorem}
For every $\delta>0$, $\pr[\Mc\models\varphi] - \delta \leq \expected[X_\varphi^\delta] \leq \pr[\Mc\models\varphi] + \delta$.
\end{theorem}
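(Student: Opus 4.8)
The plan is to bound the two-sided error of the LTL algorithm by reducing it to the BSCC-detection guarantee of Theorem~\ref{thm:main}, applied now in the product chain $\Mc\otimes\dra_\varphi$. By the preceding lemma, $\pr[\Mc\models\varphi]=\pr[\reach\mathsf{AccBSCC}]$, and this latter event is precisely the event that the final candidate of a run of the product chain is an accepting BSCC. The variable $X_\varphi^\delta$ reports exactly whether the detected candidate is accepting, so the only way $X_\varphi^\delta$ disagrees with the true membership in $\mathsf{AccBSCC}$ is when \textsc{ReachedBSCC} stops on a candidate $\candidate(\path)$ that is \emph{not} a genuine BSCC of the product chain. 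First I would make precise that, conditioned on the detected candidate actually being a BSCC, the classification is always correct: a true BSCC $C$ is accepting iff $C\cap(S\times E)=\emptyset$ and $C\cap(S\times F)\neq\emptyset$ for some $(E,F)\in\draAcc$, which is exactly the test the algorithm returns.

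Next I would isolate the unique error event. Let $B$ be the event that \textsc{SinglePathLTL} terminates with $\candidate(\path)\in\bscc$ (of the product chain). By Theorem~\ref{thm:main}, the complementary event $\overline B$---terminating on a non-BSCC candidate---has probability at most $\delta$. On $B$, the algorithm's output equals the indicator of $\reach\mathsf{AccBSCC}$ by the correctness of the acceptance test just described, so $X_\varphi^\delta=\mathbf 1[\reach\mathsf{AccBSCC}]$ there. Then I would split:
\begin{align*}
\expected[X_\varphi^\delta]
&=\pr[X_\varphi^\delta=1,B]+\pr[X_\varphi^\delta=1,\overline B]\\
&=\pr[\reach\mathsf{AccBSCC},B]+\pr[X_\varphi^\delta=1,\overline B].
\end{align*}
For the upper bound, $\pr[\reach\mathsf{AccBSCC},B]\le\pr[\reach\mathsf{AccBSCC}]$ and $\pr[X_\varphi^\delta=1,\overline B]\le\pr[\overline B]\le\delta$, giving $\expected[X_\varphi^\delta]\le\pr[\Mc\models\varphi]+\delta$. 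For the lower bound, write $\pr[\reach\mathsf{AccBSCC},B]=\pr[\reach\mathsf{AccBSCC}]-\pr[\reach\mathsf{AccBSCC},\overline B]\ge\pr[\Mc\models\varphi]-\delta$, and drop the nonnegative term $\pr[X_\varphi^\delta=1,\overline B]$, yielding $\expected[X_\varphi^\delta]\ge\pr[\Mc\models\varphi]-\delta$.

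The key structural difference from Theorem~\ref{thm:reach_exp} is that there the error was one-sided: a false candidate could only suppress a genuine reachability of $G$ (false negative), never manufacture one, because reaching $G$ is detected soundly. Here, by contrast, an early termination on a non-BSCC candidate can err in \emph{both} directions---it may declare a non-accepting verdict on a run that would eventually reach an accepting BSCC, or declare an accepting verdict on a candidate that is not actually a BSCC (and whose true BSCC is non-accepting). This is why the bound becomes symmetric. The main obstacle I anticipate is being careful that the error bound $\delta$ from Theorem~\ref{thm:main} transfers verbatim to the product chain: I would note that $\Mc\otimes\dra_\varphi$ is itself a finite Markov chain with the same $\pmin$ (as remarked after the product definition), so \textsc{ReachedBSCC} run on product paths enjoys exactly the same $\delta$ guarantee, and the whole argument is just Theorem~\ref{thm:main} plus a clean inclusion–exclusion over the single error event $\overline B$.
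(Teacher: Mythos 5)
Your proof is correct. Note, however, that the paper states this theorem \emph{without any proof}: the only justification it offers is the one-line remark preceding the statement, that since the BSCC must be reached and fully explored to classify it correctly, the error is now two-sided. So there is no official argument to compare against, and yours is the natural formalization of that remark. Your three ingredients are exactly what is needed: (i) the product $\Mc\otimes\dra_\varphi$ is a finite Markov chain with the same $\pmin$ (as the paper notes after defining the product), so Theorem~\ref{thm:main} applies verbatim and gives $\pr[\overline B]\leq\delta$ for the event $\overline B$ that \textsc{ReachedBSCC} fires on a non-BSCC candidate; (ii) conditioned on $B$, the classification is exact --- the essential observation, which you state, is that BSCCs are absorbing, so a run that has entered a genuine BSCC $C$ satisfies $\reach\mathsf{AccBSCC}$ iff $C$ itself is accepting (it can neither leave $C$ nor have previously visited a different BSCC), and on $B$ the candidate is the entire BSCC, so the Rabin test on the candidate decides acceptance correctly; (iii) the split $\expected[X_\varphi^\delta]=\pr[\reach\mathsf{AccBSCC},B]+\pr[X_\varphi^\delta=1,\overline B]$ then yields both bounds by the estimates you give. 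Structurally, your argument also matches how the paper proves the analogous bias bounds it \emph{does} prove: the proof of Theorem~\ref{thm:reach_exp} (one-sided, because reaching $G$ is detected soundly) and, even more closely, the decomposition $X=Y(1-W)+WZ$ in the proof of Theorem~\ref{thm:mp_conf}, where $W$ plays precisely the role of your error indicator $\mathbf{1}[\overline B]$. Your closing contrast explaining why the reachability bound is one-sided while the LTL bound is symmetric is likewise accurate.
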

Further, like in Section \ref{lab:ht}, we can reduce the hypothesis testing problem for 
\[H_0: \pr[\Mc \models \varphi]\geq p + \varepsilon\text{~\qquad and ~\qquad}H_1: \pr[\Mc \models \varphi]\leq p - \varepsilon\] 
for any $\delta<\varepsilon$ to the following hypothesis testing problem on the observable $X^\delta_\varphi$
\[H'_0: \expected[X^\delta_\varphi]\geq p + (\varepsilon-\delta)\text{\qquad and \qquad}H'_1: \expected[X^\delta_\varphi]\leq p - (\varepsilon - \delta)\,.\] 

\mybigspace\mybigspace

\subsection{Mean payoff}

We show that our method extends also to quantitative properties, such as mean payoff (also called long-run average reward).
%
Let $\Mc = (S,\Pm,\init)$ be a Markov chain and $r:S\rightarrow [0,1]$ 
be a \emph{reward} function. 
%
%
Denoting by $S_i$ the random variable returning the $i$-th state on a run, the aim is to compute
$$\MP:=\lim_{n\to\infty} \expected\left[\frac1n\sum_{i=1}^n r(S_i)\right]\,.$$
This limit exists (see, e.g.~\cite{norris1998markov}), and equals $\sum_{C\in\bscc}\pr[\reach C]\cdot \MP_C,$
where $\MP_C$ 
is the mean payoff of runs ending in $C$. 
Note that $\MP_C$ can be computed from $r$ and transition probabilities in $C$~\cite{norris1998markov}.
We have already shown how our method estimates $\pr[\reach C]$. Now we show how it extends to estimating transition probabilities in BSCCs and thus the mean payoff.
%

%
First, we focus on a single path $\path$ that has reached a BSCC $C=\candidate(\pi)$ and show how to estimate the transition probabilities $\Pm(s,s')$ for each $s,s'\in {\ccand}$.
Let $X_{s,s'}$ be the random variable denoting the event that $\nextState(s)=s'$. 
$X_{s,s'}$ is a Bernoulli variable with parameter $\Pm(s,s')$, so we use the obvious estimator $\hat\Pm(s,s')=\#_{ss'}(\path)/\#_{s}(\path)$, where $\#_{\alpha}(\pi)$ is the number of occurrences of $\alpha$ in $\pi$.
If $\pi$ is long enough so that $\#_s(\path)$ is large enough, the estimation is guaranteed to have desired precision $\trerr{s,s'}$ with desired confidence $(1-\delta_{s,s'})$. Indeed,  using H\"{o}ffding's inequality, we obtain
\begin{equation}
\pr[\hat\Pm(s,s') - \Pm(s,s')|>\trerr{s,s'}] \leq \delta_{s,s'}=2e^{-2 \#_s(\path)\cdot\trerr{s,s'}^2}\,.
\label{eq:trerr}
\end{equation}
Hence, we can extend the path $\pi$ with candidate $C$ until it is long enough so that we have a $1-\delta_{\ccand}$ confidence that {all} the transition probabilities in $\ccand$ are in the $\trerr{}$-neighbourhood of our estimates, by ensuring that
%
$\sum_{s,s'\in C}\delta_{s,s'}<\delta_{\ccand}$.
These estimated transition probabilities $\hat\Pm$ induce a mean payoff $\hat\MP_{\ccand}$. 
Moreover, $\hat\MP_{\ccand}$ estimates the real mean payoff $\MP_{\ccand}$.
Indeed, by \cite{chatterjee2012robustness,solan2003continuity}, 
\begin{equation}
|\hat\MP_{\ccand}-\MP_{\ccand}|\leq \mperr{}:=\left(1+\frac{\trerr{}}{\pmin}\right)^{2\cdot|{\ccand}|}-1\,. \label{eq:mperr}
\end{equation}
Note that by Taylor's expansion, for small $\trerr{}$, we have $\mperr{}\approx 2|{\ccand}|\trerr{}$.

\mybigspace
\begin{algorithm}[H]
  \caption{\textsc{SinglePathMP}}
  \label{alg:meanpayoff}
\begin{algorithmic}
\Require reward function $r$, $\pmin,\mperr{},\delta\in(0,1]$,
\Ensure $\hat\MP_C$ such that $|\hat\MP_C-\MP_C|<\mperr{}$ where $C$ is the BSCC of the generated run
\State $\path \gets \emptypath$
\Repeat
\State $\path \gets \path \,.\, \nextState(\path)$
\If {$\candidate(\path) \neq \bot$}
\State $\trerr{}=\pmin((1+\mperr{})^{1/2|\candidate(\path)|}-1)$ \Comment By Equation (\ref{eq:mperr})
\State $k \gets \frac{\ln(2 |\candidate(\path)|^2)- \ln (\delta/2)}{2 \trerr{}^2 }$ \Comment By Equation~(\ref{eq:trerr}) 
\EndIf
\Until \textsc{ReachedBSCC}($\path,\pmin,\delta/2$) and $\textsc{SCand}_{k}(\candidate(\path),\path)$ 
\State \Return $\hat\MP_{\candidate(\path)}$ computed from $\hat\Pm$ and $r$
\end{algorithmic}
\end{algorithm}
\mybigspace

Algorithm~\ref{alg:meanpayoff} extends Algorithm~\ref{alg:reach} as follows. 
It divides the confidence parameters $\delta$ into $\delta_{BSCC}$ (used as in Algorithm~\ref{alg:reach} to detect the BSCC) and $\delta_C$ (the total confidence for the estimates on transition probabilities). For simplicity, we set $\delta_{BSCC}=\delta_{C}=\delta/2$.
First, we compute the bound $\trerr{}$ required for $\mperr{}$-precision (by Eq.~\ref{eq:mperr}). 
Subsequently, we compute the required strength $k$ of the candidate guaranteeing $\delta_{\ccand}$-confidence on $\hat\Pm$ (from Eq.~\ref{eq:trerr}).
The path is prolonged until the candidate is strong enough; in such a case $\hat\MP_{\ccand}$ is $\mperr{}$-approximated with $1-\delta_{\ccand}$ confidence. 
If the candidate of the path changes, all values are computed from scratch for the new candidate.

\begin{theorem}\label{thm:mp}
For every $\delta>0$, the Algorithm \ref{alg:meanpayoff} terminates correctly with probability at least $1-\delta$.
\end{theorem}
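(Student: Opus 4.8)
The plan is to separate ``terminates correctly'' into two claims---almost-sure termination, and correctness of the returned estimate with probability at least $1-\delta$---and to bound the correctness error by a union over the two ways the algorithm can fail: misidentifying the BSCC, and estimating the transition probabilities inside a correctly identified BSCC too coarsely. Each failure mode is allotted a budget of $\delta/2$, matching the split $\delta_{BSCC}=\delta_{C}=\delta/2$ performed in the algorithm.

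For termination I would argue that, since $\Mc$ is finite, almost every run enters some BSCC $C$ and visits every state of $C$ infinitely often; by the discussion after Example~\ref{ex:eigen} the candidate sequence $(\candidate_i)_i$ is finite, so after finitely many steps the candidate freezes at its final value, which is almost surely the true BSCC $C$. Once the candidate equals $C$, the locally recomputed values $\trerr{}$ and $k$ are fixed, and both loop guards---\textsc{ReachedBSCC}$(\path,\pmin,\delta/2)$ and $\textsc{SCand}_{k}(\candidate(\path),\path)$---only require that every state of $C$ be revisited sufficiently often after the candidate's birth, which happens almost surely. Hence the loop exits with probability one.

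For correctness I would introduce two bad events. Let $A$ be the event that the candidate $C=\candidate(\path)$ returned at termination is not a genuine BSCC; since \textsc{ReachedBSCC} is called with confidence parameter $\delta/2$, Theorem~\ref{thm:main} gives $\pr[A]\le\delta/2$. On $A^c$ the run is trapped in the true BSCC $C$, the empirical matrix $\hat\Pm$ defines a Markov chain on $C$, and $\hat\MP_C$ is computed from it. Here the termination guard $\textsc{SCand}_{k}(C,\path)$ forces $\#_s(\path)\ge k$ for every $s\in C$, with $k$ as set in the algorithm. Substituting this $k$ into Equation~(\ref{eq:trerr}) gives, for each of the at most $|C|^2$ ordered pairs $(s,s')$, the per-pair bound $\delta_{s,s'}\le\delta/(2|C|^2)$; a union bound over the pairs shows that the event $B$ that some entry of $\hat\Pm$ deviates from $\Pm$ by more than $\trerr{}$ has probability at most $\delta/2$. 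Finally, $\trerr{}$ is chosen in the algorithm exactly as the inverse of Equation~(\ref{eq:mperr}), so on $A^c\cap B^c$ all entries are within $\trerr{}$ and Equation~(\ref{eq:mperr}) yields $|\hat\MP_C-\MP_C|\le\mperr{}$. Thus the algorithm is incorrect only on $A\cup B$, and $\pr[A\cup B]\le\delta/2+\delta/2=\delta$.

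I expect the delicate step to be the application of Equation~(\ref{eq:trerr}) at the \emph{random} stopping time: the sample count $\#_s(\path)$ is itself determined by the trajectory and is therefore correlated with $\hat\Pm(s,\cdot)$, so invoking Hoeffding ``at stopping'' is not immediate. The way I would make this rigorous is to condition on the final candidate being a fixed BSCC $C$: then $k$ becomes deterministic, the successive transitions leaving each $s\in C$ are i.i.d.\ draws from $\Pm(s,\cdot)$ by the strong Markov property, and because the right-hand side of Equation~(\ref{eq:trerr}) is monotone decreasing in $\#_s(\path)$, the bound at the guaranteed lower count $\#_s(\path)\ge k$ dominates. A secondary, purely computational check is that the algorithm's settings $\trerr{}=\pmin((1+\mperr{})^{1/2|C|}-1)$ and $k=(\ln(2|C|^2)-\ln(\delta/2))/(2\trerr{}^2)$ really realise the per-pair budget $\delta/(2|C|^2)$ and the precision $\mperr{}$, which is a direct inversion of Equations~(\ref{eq:mperr}) and~(\ref{eq:trerr}).
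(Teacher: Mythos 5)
Your proposal is correct and follows essentially the same route as the paper's proof: you split $\delta$ into $\delta_{BSCC}=\delta_{C}=\delta/2$, invoke Theorem~\ref{thm:main} for the event $\candidate(\path)\notin\bscc$, apply H\"{o}ffding's bound~(\ref{eq:trerr}) with a union bound over the at most $|C|^2$ transition pairs, and convert $\trerr{}$-accuracy into $\mperr{}$-accuracy via~(\ref{eq:mperr}), which matches the paper's total-probability computation $\delta_C+\delta_{BSCC}\leq\delta$ exactly. If anything, your write-up is more careful than the paper's, which neither argues almost-sure termination explicitly nor acknowledges the stopping-time subtlety in applying~(\ref{eq:trerr}) at the random count $\#_s(\path)$ that you flag and address.
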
 
\begin{proof}
From Eq.~\ref{eq:trerr}, by the union bound, we are guaranteed that the probability that \emph{none} of the estimates $\hat{\Pm}_{s,s'}$ is outside of the $\mperr{}$-neighbourhood doesn't exceed the sum of all respective estimation errors, that is, $\delta_C=\sum_{s,s'\in\ccand} \delta_{s,s'}$. 
Next, from Eq.~\ref{eq:mperr} and from the fact that $C$ is subject to Theorem~\ref{thm:main} with confidence $\delta_{BSCC}$,
\begin{align*}
P(|\MP_C(r)&-\hat{\MP}_C(r)|>\mperr{}) = \\
= & P(C\in\bscc)P(|\MP(r)-\hat{\MP}(r)|>\mperr{}\mid C \in\bscc) +\\
& P(C\notin\bscc)P(|\MP(r)-\hat{\MP}(r)|>\mperr{}\mid C\notin\bscc)\\
\leq & 1\cdot\delta_C+\delta_{BSCC}\cdot 1 = \delta_C+\delta_{BSCC}\leq \delta.
\end{align*}
\QED\end{proof}


Let random variable $X_\MP^{\mperr{},\delta}$ denote the value \textsc{SinglePathMP}($r,\pmin,\mperr{},\delta$).
The following theorem establishes relation between the mean-payoff $\MP$ and the expected value of $X_\MP^{\mperr{},\delta}$.

\begin{theorem}
\label{thm:mp_conf}
  For every $\delta, \mperr{} > 0$,  $\MP - \mperr{} - \delta \leq \expected[X_\MP^{\mperr{},\delta}] \leq \MP + \mperr{} + \delta$.
\end{theorem}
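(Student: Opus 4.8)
The plan is to read $X_\MP^{\mperr{},\delta}$ as an estimate $\hat\MP_C$ of the mean payoff of the BSCC into which the sampled run settles, and then to bound the bias $\bigl|\expected[X_\MP^{\mperr{},\delta}]-\MP\bigr|$ by splitting the sample space according to whether the run of Algorithm~\ref{alg:meanpayoff} terminates correctly, as quantified by Theorem~\ref{thm:mp}. Concretely, let $B$ denote the (almost surely well-defined, by the remark in \secref{sec:prelim}) BSCC formed by the states visited infinitely often on the sampled run, and let $C=\candidate(\path)$ be the candidate reported at termination, so that $X_\MP^{\mperr{},\delta}=\hat\MP_C$; recall that Algorithm~\ref{alg:meanpayoff} terminates almost surely. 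The two halves of the claimed inequality are precisely the two-sided bound $\bigl|\expected[X_\MP^{\mperr{},\delta}]-\MP\bigr|\le\mperr{}+\delta$, so it suffices to bound this single quantity.

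First I would establish the identity $\MP=\expected[\MP_B]$. Since a BSCC has no outgoing transitions, reaching a BSCC $C'$ is equivalent to settling in it, so $\pr[\reach C']=\pr[B=C']$ for every $C'\in\bscc$, and these events partition the sample space almost surely. Using the closed form $\MP=\sum_{C'\in\bscc}\pr[\reach C']\cdot\MP_{C'}$ recalled before the algorithm, this gives $\expected[\MP_B]=\sum_{C'\in\bscc}\pr[B=C']\,\MP_{C'}=\MP$. Next I would introduce the good event $\mathcal G$ that the algorithm terminates correctly in the sense of Theorem~\ref{thm:mp}, namely $C\in\bscc$ together with $|\hat\MP_C-\MP_C|\le\mperr{}$. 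On $\mathcal G$ the reported candidate is a genuine BSCC currently occupied by the path, hence it coincides with the run's limiting BSCC, $C=B$, and therefore $|\hat\MP_C-\MP_B|\le\mperr{}$; Theorem~\ref{thm:mp} yields $\pr[\mathcal G]\ge 1-\delta$, i.e.\ $\pr[\mathcal G^c]\le\delta$. I would also record that since $r:S\to[0,1]$, every true or estimated mean payoff lies in $[0,1]$, so $|\hat\MP_C-\MP_B|\le 1$ holds unconditionally.

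The conclusion then follows from the triangle inequality and the split on $\mathcal G$:
\begin{align*}
\bigl|\expected[X_\MP^{\mperr{},\delta}]-\MP\bigr|
&=\bigl|\expected[\hat\MP_C-\MP_B]\bigr|
\le\expected\bigl[\,|\hat\MP_C-\MP_B|\,\bigr]\\
&=\expected\bigl[|\hat\MP_C-\MP_B|\cdot\mathbf{1}_{\mathcal G}\bigr]+\expected\bigl[|\hat\MP_C-\MP_B|\cdot\mathbf{1}_{\mathcal G^c}\bigr]\\
&\le\mperr{}\cdot\pr[\mathcal G]+1\cdot\pr[\mathcal G^c]\le\mperr{}+\delta.
\end{align*}
The main obstacle I anticipate is the bookkeeping of the first two paragraphs rather than this final estimate: one must argue cleanly that the candidate reported on the finite terminating path equals the run's limiting BSCC \emph{exactly on the correct event}, so that the per-BSCC guarantee of Theorem~\ref{thm:mp} transfers into an $\mperr{}$-bound against $\MP_B$, and that the decomposition $\MP=\expected[\MP_B]$ is valid. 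Once these are in place, bounding the contribution of the bad event $\mathcal G^c$ by $\delta$ is immediate from the $[0,1]$ range of the rewards.
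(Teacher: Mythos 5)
Your proof is correct and takes essentially the same route as the paper: both split on the event that Algorithm~\ref{alg:meanpayoff} behaved correctly (your $\mathcal G$, the paper's Bernoulli indicator $W$ in the decomposition $X^{\mperr{},\delta}_\MP=Y(1-W)+WZ$), invoke Theorem~\ref{thm:mp} to bound the bad event by $\delta$, use the $[0,1]$ range of (true and estimated) mean payoffs on the bad event, and use the $\mperr{}$-guarantee on the good one. Your explicit verification of the identity $\MP=\expected[\MP_B]$ and of $C=B$ on the correct event makes rigorous some bookkeeping that the paper's proof leaves implicit in its bounds on $\expected[Y]$, $\expected[YW]$ and $\expected[WZ]$.
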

\begin{proof}
Let us write $X_\MP^{\mperr{},\delta}$ as an expression of  random variables $Y,W,Z$
\[
X^{\mperr{},\delta}_\MP \;=\; Y(1-W) + WZ,  
\]
where 1) $W$ is a Bernoulli random variable, such that $W=0$ iff the algorithm correctly detected the BSCC and estimated transition probabilities within bounds,
2) $Y$ is the value computed by the algorithm if $W=0$, and the real mean payoff $\MP$ when $W=1$,
and 3) $Z$ is any random variable with the range $[0,1]$.
The interpretation is as follows: when $W=0$ we observe the result $Y$, which has bounded error $\mperr{}$, and when $W=1$ we observe arbitrary $Z$.
We note that  $Y,W,Z$ are not necessarily independent.
By Theorem \ref{thm:mp} $\expected[W] \leq \delta$ and by linearity of expectation: $\expected[X_\MP^{\mperr{},\delta}] = \expected[Y] - \expected[YW] + \expected[WZ]$.
For the upper bound, observe that $\expected[Y] \leq \MP + \mperr{}$, $\expected[YW]$ is non-negative and $ \expected[WZ] \leq \delta$.
As for the lower bound, note that $\expected[Y] \geq \MP - \mperr{}$,  $\expected[YW] \leq \delta$ and $\expected[WZ]$ is non-negative.
\QED \end{proof}
As a consequence of Theorem \ref{thm:mp_conf}, if we establish that with $(1-\alpha)$ confidence $X_\MP^{\mperr{},\delta}$ belongs to the interval $[a,b]$, then we can conclude with $(1-\alpha)$ confidence that $\MP$ belongs to the interval $[a-\mperr{} - \delta, b+\mperr{}+\delta]$.
Standard statistical methods can be applied to find the confidence bound for  $X_\MP^{\mperr{},\delta}$.



\newcommand{\emc}{\textsf{MC}}
\newcommand{\ecand}{\mathsf{SimAdaptive}}
\newcommand{\eterm}{\mathsf{SimTermination}}
\newcommand{\eanal}{\mathsf{SimAnalysis}}
\newcommand{\tout}{\mathsf{TO}}
\renewcommand{\wr}{\mathsf{WRONG}}
\newcommand{\memout}{\mathsf{MO}}
\newcommand{\pl}{\mathsf{PL}}
\newcommand{\error}{\mathsf{ERR}}

\newcommand{\ba}[1]{\cellcolor[gray]{0.6}{#1}}
\newcommand{\bb}[1]{#1}

\mybigspace
\section{Experimental evaluation}
\label{sec:exper}

We implemented our algorithms in the  probabilistic model checker \textsc{Prism} \cite{prism}, and evaluated them on the DTMC examples from the \textsc{Prism} benchmark suite \cite{benchmarks}.
The benchmarks model  communication and security protocols, distributed algorithms, and fault-tolerant systems.
To demonstrate how our method performs depending on the topology of Markov chains, we also performed experiments on the generic  DTMCs shown in Figure \ref{fig:self-looping states} and Figure \ref{fig:add}, as well as on two  CTMCs from the literature that have large BSCCs: ``tandem'' \cite{hermanns1999multi} and ``gridworld'' \cite{DBLP:journals/sttt/YounesKNP06}.

\begin{table}[H]
\caption{Experimental results for unbounded reachability. Simulation parameters: $\alpha=\beta=\varepsilon=0.01$, $\delta=0.001$, $\pt=0.0001$. 
$\tout$ means time-out, and $\memout$ means memory-out.
 Our approach is denoted by $\ecand$ here.
Highlights show the best result the among topology-agnostic methods. }
\label{tab:results}
\centering
\resizebox{\columnwidth}{!}{
\begin{tabular}{cccc|c|c||cc|c} 
\hline
\multicolumn{3}{c}{Example} & BSCC  & \multicolumn{1}{c|}{$\ecand$} & \multicolumn{1}{c||}{$\eterm$\cite{sbmf11}} & \multicolumn{2}{c|}{$\eanal$\cite{sbmf11}} & \multicolumn{1}{c}{$\emc$}\\
name & size & $\pmin$ & no., max.\ size  &		time & 	time &  	time  & analysis &	time\\
\hline
bluetooth(4)                              & 149K                  & $7.8\cdot10^{-3}$     & 3K, 1                 & \ba{  2.6s} &       16.4s    &      83.2s    &     80.4s    &  78.2s   \\
bluetooth(7)                              & 569K                  & $7.8\cdot10^{-3}$     & 5.8K, 1               & \ba{  3.8s} &       50.2s    &     284.4s    &    281.1s    & 261.2s   \\
bluetooth(10)                             & >569K                 & $7.8\cdot10^{-3}$     & >5.8K, 1              & \ba{  5.0s} &      109.2s    &   $\tout$     &    -         &    $\tout$       \\
\hline
  brp(500,500)                              & 4.5M                  & $0.01$                & 1.5K, 1             & \ba{  7.6s} &    13.8s    	&     35.6s 	 &     30.7s    &   103.0s      \\
brp(2K,2K)                                & 40M                   & $0.01$                & 4.5K, 1               &    20.4s    &   \ba{ 17.2s} &    824.4s      & 789.9s       & $\tout$       \\
brp(10K,10K)                              & >40M                  & $0.01$                & >4.5K, 1              &    89.2s    &   \ba{ 15.8s} &    $\tout$     &   -          &    $\tout$       \\
\hline
crowds(6,15)                              & 7.3M                  & $0.066$               & >3K, 1                & \ba{  3.6s} &   253.2s    &   \bb{2.0s} &     0.7s     &    19.4s      \\
crowds(7,20)                              & 17M                   & $0.05$                & >3K, 1                & \ba{  4.0s} &   283.8s    &   \bb{2.6s} &      1.1s    &   347.8s      \\
crowds(8,20)                              & 68M                   & $0.05$                & >3K, 1                & \ba{  5.6s} &   340.0s    &   \bb{4.0s} &      1.9s    & $\tout$       \\
\hline
eql(15,10)                                & 616G                  & $0.5$                 & 1, 1                  & \ba{ 16.2s} &  $\tout$     &      151.8s    & 145.1s    & 110.4s   \\
eql(20,15)                                & 1279T                 & $0.5$                 & 1, 1                  & \ba{ 28.8s} &  $\tout$     &      762.6s    & 745.4s    & 606.6s   \\
eql(20,20)                                & 1719T                 & $0.5$                 & 1, 1                  & \ba{ 31.4s} &  $\tout$     &    $\tout$     & -         &      $\tout$       \\
\hline
herman(17)                                & 129M                  & $7.6\cdot10^{-6}$     & 1, 34                 & \ba{ 23.0s} &       33.6s    &    21.6s      &      0.1s    & \bb{  1.2s}   \\
herman(19)                                & 1162M                 & $1.9\cdot10^{-6}$     & 1, 38                 & \ba{ 96.8s} &      134.0s    &    86.2s      &      0.1s    & \bb{  1.2s}   \\
herman(21)                                & 10G                   & $4.7\cdot10^{-7}$     & 1, 42                 & \ba{570.0s} &    $\tout$     &      505.2s   &      0.1s    & \bb{  1.4s}   \\
\hline
leader(6,6)                               & 280K                  & $2.1\cdot10^{-5}$     & 1, 1                  & \ba{  5.0s} &        5.4s    &      536.6s    &    530.3s    & 491.4s   \\
leader(6,8)                               & >280K                 & $3.8\cdot10^{-6}$     & 1, 1                  & \ba{ 23.0s} &       26.0s    & $\memout$      &  -           &     $\memout$   \\
leader(6,11)                              & >280K                 & $5.6\cdot10^{-7}$     & 1, 1                  & \ba{153.0s} &      174.8s    & $\memout$      &  -           &     $\memout$   \\
\hline
nand(50,3)                                & 11M                   & $0.02$                & 51, 1                 & \ba{  7.0s} &   231.2s    &  36.2s &  31.0s    &   272.0s      \\
nand(60,4)                                & 29M                   & $0.02$                & 61, 1                 & \ba{  6.0s} &   275.2s    &  60.2s &  56.3s    & $\tout$       \\
nand(70,5)                                & 67M                   & $0.02$                & 71, 1                 & \ba{  6.8s} &   370.2s    & 148.2s & 144.2s    & $\tout$       \\
\hline
tandem(500)                                & >1.7M                 & $2.4\cdot10^{-5}$     & 1, >501K              & \ba{  2.4s}  &        6.4s    &     4.6s    &      3.0s    &   3.4s   \\
tandem(1K)                               & 1.7M                  & $9.9\cdot10^{-5}$     & 1, 501K                 & \ba{  2.6s}  &       19.2s    &    17.0s    &     12.7s    &  13.0s   \\
tandem(2K)                                & >1.7M                 & $4.9\cdot10^{-5}$     & 1, >501K               & \ba{  3.4s}  &       72.4s    &    62.4s    &     59.8s    &  59.4s   \\
\hline
gridworld(300)                            & 162M                  & $1\cdot10^{-3}$       & 598, 89K              & \ba{  8.2s} &    81.6s    & $\memout$   &   - &    $\memout$     \\
gridworld(400)                            & 384M                  & $1\cdot10^{-3}$       & 798, 160K             & \ba{  8.4s} &     100.6s  & $\memout$   &   - &    $\memout$     \\
gridworld(500)                            & 750M                  & $1\cdot10^{-3}$       & 998, 250K             & \ba{  5.8s} &    109.4s   & $\memout$   &   - &    $\memout$     \\
\hline
Fig.\ref{fig:self-looping states}(16)     & 37                    & $0.5$                 & 1, 1                  & \ba{58.6s} &   $\tout$    & 23.4s 		& 0.4s      & 2.0s\\
Fig.\ref{fig:self-looping states}(18)     & 39                    & $0.5$                 & 1, 1                  & $\tout$    &  $\tout$     & 74.8.0s 	& 1.8s      & 2.0s\\
Fig.\ref{fig:self-looping states}(20)     & 41                    & $0.5$                 & 1, 1                  & $\tout$    &   $\tout$    &   513.6s    	&  11.3s    &     2.0s      \\
\hline
Fig.\ref{fig:add}(1K,5)                   & 4022                  & $0.5$                 & 2, 5                  & \ba{  7.8s} &     218.2s    &     3.2s      &   0.5s    & \bb{  1.2s}   \\
Fig.\ref{fig:add}(1K,50)                  & 4202                  & $0.5$                 & 2, 50                 & \ba{ 12.4s} &     211.8s    &     3.6s      &   0.7s    & \bb{  1.0s}   \\
Fig.\ref{fig:add}(1K,500)                 & 6002                  & $0.5$                 & 2, 500,               &   431.0s    &   \ba{218.6s} &     3.6s      &   1.0s    & \bb{  1.2s}   \\
Fig.\ref{fig:add}(10K,5)                  & 40K                   & $0.5$                 & 2, 5                  & \ba{ 52.2s} &   $\tout$     &       42.2s   &  25.4s    & \bb{ 25.6s}   \\
Fig.\ref{fig:add}(100K,5)                 & 400K                  & $0.5$                 & 2, 5                  &   604.2s    &   \ba{  5.4s} &    $\tout$    & - 	    &      $\tout$       \\
\hline
\end{tabular}
}
\end{table}


%
All benchmarks are parametrised by one or more values, which influence their size and complexity, e.g.\ the number of modelled components.
We have made minor modifications to the benchmarks that could not be handled directly by the SMC component of \textsc{Prism}, by adding self-loops to deadlock states and fixing one initial state instead of multiple.
%
%
%
%

Our tool can be downloaded at \cite{tool}.
Experiments were done on a Linux 64-bit machine running an AMD Opteron $6134$
CPU  with a time limit of 15 minutes and a memory limit of 5GB.
To increase performance of our tool, we check whether a candidate has been found every $1000$ steps; this optimization does not violate correctness of our analysis. See Appendix \ref{app:id} for a discussion on this bound.

\begin{figure}[b]
\mybigspace
\centering
\scalebox{0.8}{
\begin{tikzpicture}
\node[state,initial,initial above, initial text=] (s_0) at (0,0){$s$};
\node[state,] (s_10) at (1.5,0){$u_1$};
\node[state,] (s_19) at (-1.5,0){$t_1$};

\node
(s_20)at (3,0) {$\cdots$};
\node[state] (s_30)at (4.5,0) {$u_N$};
\node[rectangle, draw] (s_40)at (6,0) {BSCC};

\node
(s_29)at (-3,0) {$\cdots$};
\node[state] (s_39)at (-4.5,0) {$t_N$};
\node[rectangle, draw] (s_49)at (-6,0) {BSCC};

\path[->]
(s_0) edge node[above] {$0.5$} (s_10)
(s_0) edge node[above] {$0.5$} (s_19)

(s_10) edge[loop above] node[left,pos=0.2] {$0.5$} (s_10)
(s_10) edge node[above] {$0.5$} (s_20)
(s_20) edge[dashed] (s_30)
(s_30) edge[loop above] node[left,pos=0.2] {$0.5$} (s_40)
(s_30) edge node[above] {$0.5$} (s_40)

(s_19) edge[loop above] node[left,pos=0.2] {$0.5$} (s_19)
(s_19) edge node[above] {$0.5$} (s_29)
(s_29) edge[dashed] (s_39)
(s_39) edge[loop above] node[left,pos=0.2] {$0.5$} (s_49)
(s_39) edge node[above] {$0.5$} (s_49);

\end{tikzpicture}}
\caption{A Markov chain with two transient parts consisting of $N$ strongly connected singletons, leading to  BSCCs with the ring topology of $M$ states.}

\label{fig:add}
\end{figure}
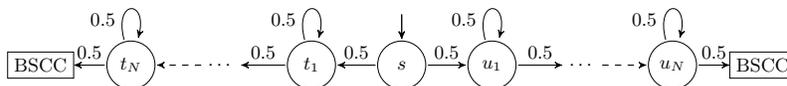

\para{Reachability}
The experimental results for unbounded reachability are shown in Table \ref{tab:results}. 
The \textsc{Prism} benchmarks were checked against their standard properties, when available.
%
We directly compare our method to another topology-agnostic method  of \cite{sbmf11} ($\eterm$), where at every step the sampled path is terminated with probability $\pt$.
The approach of \cite{atva14} with a priori bounds is not included, since it times out even on the smallest benchmarks. 
In addition, we performed experiments on two methods that are topology-aware: sampling with reachability analysis of \cite{sbmf11} ($\eanal$) and the numerical model-checking algorithm of \textsc{Prism} ($\emc$).
Appendix \ref{app:de}  contains detailed experimental evaluation of these methods.

%
The table shows the size of every example, its minimum probability, the number of BSCCs, and the size of the largest BSCC.
Column ``time'' reports the total wall time for the respective algorithm,
 and ``analysis'' shows the time for symbolic reachability analysis in the $\eanal$ method.
Highlights show the best result among the topology-agnostic methods.
All statistical methods were used with the SPRT test for choosing between the hypothesis, and their results were averaged over five runs.

Finding the optimal termination probability $\pt$ for the $\eterm$ method is a non-trivial task.
If the probability is too high, the method might never reach the target states, thus give an incorrect result, and if the value is too low, then it might sample unnecessarily long traces that never reach the target.
For instance, to ensure a correct answer on the Markov chain in Figure \ref{fig:self-looping states}, $\pt$ has to decrease exponentially with the number of states.
By experimenting we found that the probability $\pt=0.0001$ is low enough to ensure correct results.
See Appendix \ref{app:de} for experiments with other values of $\pt$.

On most examples, our method scales better than the $\eterm$ method.
Our method performs well even on examples with large BSCCs, such as ``tandem'' and ``gridworld,'' due to early termination when a goal state is reached.
For instance, on the ``gridworld'' example, most BSCCs do not contain a goal state, thus have to be fully explored, however the probability of reaching such BSCC is low, and as a consequence full BSCC exploration rarely occurs.
The $\eterm$ method performs well when the target states are unreachable or can be reached by short paths.
When long paths are necessary to reach the target, the probability that an individual path reaches the target is small, hence many samples are necessary to estimate the real probability with high confidence.

Moreover, it turns out that our method compares well even with methods that have access to the topology of the system. 
In many cases, the running time of the numerical algorithm $\emc$ increases dramatically with the size of the system, while  remaining almost constant in our method.
The bottleneck of the $\eanal$ algorithm is the reachability analysis of states that cannot reach the target, which in practice can be as difficult as numerical model checking.


\para{LTL and mean payoff}
In the second experiment, we compared our algorithm for checking LTL properties and estimating the mean payoff with the numerical methods of \textsc{Prism}; the results are shown in  Table \ref{tab:results_ltl}.
We compare against \textsc{Prism}, since we are not aware of any SMC-based or topology-agnostic approach for mean payoff, or full LTL.
For mean payoff, we computed  $95\%$-confidence bound of size $0.22$ with parameters $\delta=0.011, \mperr{}=0.08$, and for LTL we used the same parameters as for reachability.
Due to space limitations, we report results only on some models of each type, where either method did not time out.
%
In general our method scales better when  BSCCs are fairly small and are discovered quickly. 

\begin{table}[H]
\mybigspace
\caption{Experimental results for LTL and mean-payoff properties. Simulation parameters for LTL: $\alpha=\beta=\varepsilon=0.01$, $\delta=0.001$, for mean-payoff we computed $95\%$-confidence bound of size $0.22$ with  $\delta=0.011, \mperr{}=0.08$.}
\label{tab:results_ltl}
\centering
\resizebox{\columnwidth}{!}{
\begin{tabular}{cccc|ccc}
\hline
\multicolumn{4}{c|}{LTL} & \multicolumn{3}{c}{Mean payoff } \\
name & 	property	 $\quad$& $\ecand$	time$\quad$ & $\emc$ time$\quad$  & name & $\ecand$ time$\quad$ & $\emc$ time   \\

\hline
$\quad$bluetooth(10)$\quad$ & $\alws\reach$ & 	 		\ba{8.0s}    & $\tout$     & 		$\quad$bluetooth(10)$\quad$	&	\ba{3.0s}    & $\tout$       \\
brp(10K,10K) & 	$\reach\alws$ & 				\ba{90.0s}    & $\tout$     & 		brp(10K,10K)    & 	\ba{6.6s}    & $\tout$       \\
crowds(8,20) &	$\reach\alws$ & 				\ba{9.0s}    & $\tout$     & 		crowds(8,20)	& 	\ba{2.0s}    & $\tout$       \\
eql(20,20) & 	$\alws\reach$ & 				\ba{7.0s}    & $\memout$   & 		eql(20,20)      &	\ba{2.6s}    & $\tout$       \\
herman(21)  &	$\alws\reach$ & 				$\tout$     &     \ba{2.0s}    & 	herman(21)      &	$\memout$   &     \ba{3.0s}      \\
leader(6,5) &	$\alws\reach$ & 				277.0s    &   \ba{117.0s}    & 		leader(6,6)     &	\ba{48.5}	& 576.0 \\
nand(70,5) & 	$\alws\reach$ & 				\ba{4.0s}    & $\tout$     & 		nand(70,5)      & 	\ba{2.0s}    &   294.0s      \\
tandem(2K) &   $\alws\reach$ & 					$\tout$     &   \ba{221.0s}    & 	tandem(500)           		& $\tout$     &   \ba{191.0s}      \\
gridworld(100) & $\alws\reach \rightarrow \reach\alws$ & 	$\tout$     &   \ba{110.4s}   &     	gridworld(50)			& $\tout$ 	& \ba{58.1s} \\
Fig.\ref{fig:self-looping states}(20) &  $\alws\reach \rightarrow \alws\reach$ & $\tout$     &  3.4 & 	Fig.\ref{fig:self-looping states}(20) & $\tout$ & \ba{1.8s} \\
Fig.\ref{fig:add}(100K,5) &    $\alws\reach$ &			\ba{348.0s}    & $\tout$     & 		Fig.\ref{fig:add}(100K,5)	& 	\ba{79.6s}    & $\tout$       \\
Fig.\ref{fig:add}(1K,500) &   $\alws\reach$ & 			827.0s    &     \ba{2.0s}    & 		Fig.\ref{fig:add}(1K,500)       & $\tout$    &    \ba{2.0s}      \\
\hline
\end{tabular}
 }\mybigspace
\end{table}



\newcommand{\depth}{R}
\newcommand{\bsccsize}{B}
\newcommand{\bscctime}{T}
\newcommand{\simnum}{\mathit{sim}}

\mybigspace
\section{Discussion and conclusion}
\label{sec:discus}

As demonstrated by the experimental results, our method is fast on systems that are (1) shallow, and (2) with small BSCCs.
In such systems, the BSCC is reached quickly and the candidate is built-up quickly.
Further, recall that the BSCC is reported when a $k$-candidate is found, and that $k$ is increased with each candidate along the path.
Hence, when there are many strongly connected sets, and thus many candidates, the BSCC is detected by a $k$-candidate for a large $k$. 
However, since $k$ grows linearly in the number of candidates, the most important and limiting factor is the size of BSCCs.

We state the dependency on the depth of the system and BSCC sizes formally. We pick $\delta:=\frac\varepsilon2$ and let \mybigspace

$$
\simnum = \frac{ -\log\frac{\beta}{1-\alpha}\log\frac{1-\beta}{\alpha} }{\log\frac{p-\varepsilon + \delta }{p + \varepsilon - \delta}\log\frac{1-p - \varepsilon + \delta}{1-p + \varepsilon - \delta}}
\qquad \text{and} \qquad
k_{i}=\frac{i-\log\delta}{-\log(1-\pmin)}
$$ 
denote the a priori upper bound on the number of simulations necessary for SPRT~\cite{YounesThesis,wald1945sequential} and the strength of candidates as in 
Algorithm~\ref{alg:reach}, respectively.

\begin{theorem}
\label{thm:time}
Let $\depth$ denote the expected number of steps before reaching a BSCC and $\bsccsize$ the maximum size of a BSCC.
Further, let $\bscctime:=\max_{C\in\bscc;s,s'\in C}\expected[\text{time to}$ $\text{reach $s'$ from $s$}]$.
In particular, $\bscctime\in\mathcal O(\bsccsize/\pmin^\bsccsize)$.
Then the expected running time of Algorithms~\ref{alg:reach} and~\ref{alg:ltl} is at most 
$$\mathcal O(\simnum\cdot k_{\depth+\bsccsize}\cdot \bsccsize \cdot \bscctime)\,.$$
\end{theorem}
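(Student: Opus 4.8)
The plan is to factor the running time as the number of simulations times the expected length of one simulation run, and then bound the latter. By the standard SPRT analysis~\cite{YounesThesis,wald1945sequential}, the number of sampled paths needed to decide between $H_0$ and $H_1$ is at most $\simnum$, so it suffices to show that a single call to Algorithm~\ref{alg:reach} generates a run of expected length $\mathcal O(k_{\depth+\bsccsize}\cdot \bsccsize \cdot \bscctime)$. The same bound then transfers verbatim to Algorithm~\ref{alg:ltl}, since the product with the DRA leaves $\pmin$ unchanged and only augments each state with a deterministic automaton component, so that $\depth$, $\bsccsize$ and $\bscctime$ are read off the product chain and the termination test is identical.

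First I would split a run into its \emph{transient} part, the prefix before a BSCC is entered, and its \emph{confirmation} part inside the BSCC. The expected length of the transient part is $\depth$ by definition. For the confirmation part, recall that the algorithm halts once the final candidate $C$ becomes a strong $k_i$-candidate, where $i$ is the index of $C$ in the sequence of candidates; by definition this requires every state of $C$ to be visited at least $k_i$ times after the birth of $C$. I would bound this by a tour argument: starting from any state of $C$, a tour that visits all $\bsccsize$ states of $C$ one after another has expected duration at most $\bsccsize\cdot\bscctime$, since $\bscctime$ dominates the expected time between any two states of a BSCC; running $k_i$ such tours guarantees at least $k_i$ visits to each state. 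Hence, conditioned on the required strength being $k_i$, the expected confirmation time is $\mathcal O(k_i\cdot \bsccsize\cdot \bscctime)$.

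Next I would bound the index $i$ of the final candidate. By the nesting/disjointness structure of candidates observed in Example~\ref{ex:eigen}, the number of distinct candidates arising along a prefix is linear in the number of distinct states it visits. Before the BSCC is entered at most $\mathcal O(\depth)$ distinct states are seen in expectation, and inside the BSCC at most $\mathcal O(\bsccsize)$ further candidates can be born before $C$ itself becomes the candidate, so $\expected[i]=\mathcal O(\depth+\bsccsize)$. The key point enabling the combination is that $k_i$ is \emph{affine} in $i$, whence $\expected[k_i]=k_{\expected[i]}$ and, by monotonicity of $k_{(\cdot)}$, $\expected[k_i]=\mathcal O(k_{\depth+\bsccsize})$. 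Conditioning on the transient prefix, which fixes the candidate count seen so far, applying the tour estimate, and then taking expectations yields an expected confirmation cost of $\mathcal O(k_{\depth+\bsccsize}\cdot \bsccsize\cdot \bscctime)$; adding the transient cost $\depth$, which is dominated by this quantity (as $\bsccsize,\bscctime\geq 1$ and $k_{\depth+\bsccsize}\geq\depth$ in the relevant regime of small $\pmin$), and multiplying by $\simnum$ gives the stated bound. For the auxiliary estimate $\bscctime\in\mathcal O(\bsccsize/\pmin^{\bsccsize})$ I would note that within a BSCC of size $\bsccsize$ every $s'$ is reachable from every $s$ along a path of length at most $\bsccsize$ of probability at least $\pmin^{\bsccsize}$; treating successive length-$\bsccsize$ windows as Bernoulli trials with success probability at least $\pmin^{\bsccsize}$ bounds the expected hitting time by $\bsccsize/\pmin^{\bsccsize}$.

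I expect the main obstacle to be the rigorous treatment of the \emph{product of correlated random quantities} — the random required strength $k_i$ and the confirmation time, both of which depend on the realized run. The affine dependence of $k_i$ on $i$ (which turns $\expected[k_i]$ into $k_{\expected[i]}$), together with conditioning on the transient prefix before invoking the tour estimate, is what makes this product tractable; pinning down the linear bound on the number of candidates $i$, rather than the BSCC-tour estimate, is the genuinely delicate step.
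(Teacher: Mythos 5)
Your proposal is correct and follows essentially the same route as the paper's proof: the factorisation into $\simnum$ times the expected single-run length, the bound of roughly $2(\depth+\bsccsize)$ candidates on a run (hence required strength $\mathcal O(k_{\depth+\bsccsize})$ by linearity of $k_i$ in $i$), the tour argument giving expected cost $\bsccsize\cdot\bscctime$ per full sweep of the BSCC, and the length-$\bsccsize$ Bernoulli-window estimate for $\bscctime\leq\bsccsize/\pmin^{\bsccsize}$ all appear in the paper's argument. If anything, you are slightly more explicit than the paper about the correlation between the random candidate index and the confirmation time, and about why the bound transfers to Algorithm~\ref{alg:ltl} through the product construction.
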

\begin{proof}
We show that the expected running time of each simulation is at most $k_{\depth+\bsccsize}\cdot \bsccsize \cdot \bscctime$.
Since the expected number of states visited is bounded by $\depth+\bsccsize$, the expected number of candidates on a run is less than $2(\depth+\bsccsize)-1$. 
Since $k_i$ grows linearly in $i$ it is sufficient to prove that the expected time to visit each state of a BSCC once (when starting in BSCC) is at most $\bsccsize \cdot \bscctime$. 
We order the states of a BSCC as $s_1,\ldots,s_b$, then the time is at most $\sum_{i=1}^b \bscctime$, where $b\leq \bsccsize$. This yields the result since $\depth\in\mathcal O(k_{\depth+\bsccsize}\cdot \bsccsize \cdot \bscctime)$.

It remains to prove that $\bscctime\leq \bsccsize/\pmin^\bsccsize$. Let $s$ be a state of a BSCC of size at most $\bsccsize$. 
Then, for any state $s'$ from the same BSCC, the shortest path from $s$ to $s'$ has length at most $\bsccsize$ and probability at least $\pmin^\bsccsize$. 
Consequently, if starting at $s$, we haven't reached $s'$ after $B$ steps with probability at most $1-\pmin^{\bsccsize}$, and we are instead in some state $s''\neq s'$, from which, again, the probability to reach $s'$ within $B$ steps at least $\pmin^{\bsccsize}$. 
Hence, the expected time to reach $s'$ from $s$ is at most 
$$
\sum_{i=1}^\infty \bsccsize\cdot i(1-\pmin^\bsccsize)^{i-1}\pmin^\bsccsize, 
$$
where $i$ indicates the number of times a sequence of $B$ steps is observed.
The series can be summed by differentiating a geometric series.
As a result, we obtain a bound 
${\bsccsize}/{p^{\bsccsize}}$.
\QED\end{proof}
Systems that have large deep BSCCs require longer time to reach for the required level of confidence.
However, such systems  are often difficult to handle also for other methods agnostic of the topology. For instance, correctness of \cite{sbmf11} on the example in Figure \ref{fig:self-looping states} relies on 
the termination probability $\pt$ being at most $1-\lambda$, which is less than $2^{-n}$ here. Larger values lead to incorrect results and smaller values to paths of exponential length.
Nevertheless, our procedure usually runs faster than the bound suggest; for detailed discussion see Appendix \ref{app:match}.



%
%
\smallskip

\textbf{Conclusion.}
To the best of our knowledge, we propose the first statistical model-checking 
method that exploits the information provided by each simulation run 
on the fly, in order to detect quickly a potential BSCC, and verify LTL 
properties with the desired confidence. 
This is also the first application of SMC to quantitative properties such as 
mean payoff. 
We note that for our method to work correctly, the precise value of $\pmin$ is not necessary, but a lower bound is sufficient. This lower bound can come from domain knowledge, or can be inferred directly from description of white-box systems, such as the \textsc{Prism} benchmark.

The approach we present is not meant to replace the other methods, but rather to be an addition to the repertoire of available approaches. Our method is particularly valuable for models that have small BSCCs and huge state space, such as many of the \textsc{Prism} benchmarks.

In future work, we plan to investigate the applicability of our method to Markov decision 
processes, and to deciding language equivalence between two Markov chains.


\bibliographystyle{plain}
\bibliography{main}

\newpage
\appendix

\section*{Appendix}

\section{Detailed experiments}
\label{app:de}
Table \ref{tab:det_results} shows detailed experimental result for unbounded reachability. 
Compared to Table \ref{tab:results} we included: 1) experiments for the $\eterm$ method with two other values of $\pt$, 2) we report the number  of sampled paths as ``samples,''
and 3) we report the average length of sampled paths as ``path length.''
Topology-agnostic methods, such as $\ecand$ and $\eterm$, cannot be compared directly with topology-aware methods, such as $\eanal$ and $\emc$, however for reader's curiosity
we highlighted in the table the best results among \emph{all} methods.

We observed that in the ``herman'' example the SMC algorithms work unusually slow. 
This problem seems to be caused by a bug in the original sampling engine of \textsc{Prism} and it appears that all SMC algorithms suffer equally from this problem.

\begin{landscape}
\begin{table}[ht]
\caption{Detailed experimental results for unbounded reachability. Simulation parameters: $\alpha=\beta=\varepsilon=0.01$, $\delta=0.001$. 
$\tout$ means a timeout or memory out, and $\wr$ means that the reported result was incorrect.
 Our approach is denoted by $\ecand$ here.
Highlights show the best result among \emph{all} methods. }
\label{tab:det_results}
\centering
\resizebox{\columnwidth}{!}{
\begin{tabular}{c|ccc|ccc|ccc|ccc||cccc|c} 
\hline
Example  & \multicolumn{3}{c|}{$\ecand$} & \multicolumn{3}{c|}{$\eterm, \pt=10^{-3}$} & \multicolumn{3}{c|}{$\eterm, \pt=10^{-4}$} & \multicolumn{3}{c||}{$\eterm, \pt=10^{-5}$} & \multicolumn{4}{c|}{$\eanal$} & $\emc$\\
name & 	time & samples & path length & time & samples & path length & time & samples & path length & time & samples & path length & time & samples & path length & analysis & time\\
\hline
bluetooth(4)                              & \ba{  2.6s} &   243     &   499     &   185.0s    & 43764     &   387     &    16.4s    &  3389     &   484     &     6.4s    &   463     &   495     &    83.2s    &   219     &   502     &  80.4s    &    78.2s      \\
bluetooth(7)                              & \ba{  3.8s} &   243     &   946     &   697.4s    & 106732     &   604     &    50.2s    &  6480     &   897     &    10.2s    &   792     &   931     &   284.4s    &   219     &   937     & 281.1s    &   261.2s      \\
bluetooth(10)                             & \ba{  5.0s} &   243     &  1391     & $\tout$     &   -       &   -       &   109.2s    &  9827     &  1292     &    15.0s    &   932     &  1380     & $\tout$     &   -       &   -       &   -       & $\tout$       \\
\hline
brp(500,500)                              &     7.6s    &   230     &  3999     & \ba{  3.2s} &   258     &   963     &    13.8s    &   258     &  9758     &   107.2s    &   258     & 104033     &    35.6s    &   207     &  3045     &  30.7s    &   103.0s      \\
brp(2K,2K)                                &    20.4s    &   230     & 13000     & \ba{  3.4s} &   258     &  1029     &    17.2s    &   258     &  9127     &   115.0s    &   258     & 98820     &   824.4s    &   207     & 12167     & 789.9s    & $\tout$       \\
brp(10K,10K)                              &    89.2s    &   230     & 61999     & \ba{  3.6s} &   258     &   960     &    15.8s    &   258     & 10059     &   109.4s    &   258     & 96425     & $\tout$     &   -       &   -       &   -       & $\tout$       \\
\hline
crowds(6,15)                              &   3.6s &   395     &   879     &    29.2s    &  7947     &   878     &   253.2s    &  7477     &  8735     & $\tout$     &   -       &   -       & \ba{  2.0s} &   400     &    85     &   0.7s    &    19.4s      \\
crowds(7,20)                              &   4.0s &   485     &   859     &    32.6s    &  9378     &   850     &   283.8s    &  8993     &  8464     & $\tout$     &   -       &   -       & \ba{  2.6s} &   473     &    98     &   1.1s    &   347.8s      \\
crowds(8,20)                              &   5.6s &   830     &   824     &    38.2s    & 11405     &   821     &   340.0s    & 10574     &  8132     & $\tout$     &   -       &   -       & \ba{  4.0s} &   793     &   110     &   1.9s    & $\tout$       \\
\hline
eql(15,10)                                & \ba{ 16.2s} &  1149     &   652     &   303.2s    & 28259     &   628     & $\tout$     &   -       &   -       & $\tout$     &   -       &   -       &   151.8s    &  1100     &   201     & 145.1s    &   110.4s      \\
eql(20,15)                                & \ba{ 28.8s} &  1090     &  1299     &   612.8s    & 44048     &   723     & $\tout$     &   -       &   -       & $\tout$     &   -       &   -       &   762.6s    &   999     &   347     & 745.4s    &   606.6s      \\
eql(20,20)                                & \ba{ 31.4s} &  1071     &  1401     & $\tout$     & 11408     &   156     & $\tout$     &   -       &   -       & $\tout$     &   -       &   -       & $\tout$     &   -       &   -       &   -       & $\tout$       \\
\hline
herman(17)                                &  23.0s &   243     &    30     &   257.6s    &  2101     &    30     &    33.6s    &   381     &    32     &    29.0s    &   279     &    31     &    21.6s    &   219     &    30     &   0.1s    & \ba{ 1.2s}   \\
herman(19)                                & 96.8s &   243     &    40     & $\tout$     &   -       &   -       &   134.0s    &   355     &    38     &   254.4s    &   279     &    40     &    86.2s    &   219     &    38     &   0.1s    & \ba{  1.2s}   \\
herman(21)                                & 570.0s &   243     &    46     & $\memout$   &   -       &   -       & $\tout$     &   -       &   -       & $\memout$   &   -       &   -       &   505.2s    &   219     &    48     &   0.1s    & \ba{  1.4s}   \\
\hline
leader(6,6)                               & \ba{  5.0s} &   243     &     7     &     7.6s    &   437     &     7     &     5.4s    &   258     &     7     &     \ba{5.0s}    &   258     &     7     &   536.6s    &   219     &     7     & 530.3s    &   491.4s      \\
leader(6,8)                               & \ba{ 23.0s} &   243     &     7     &    62.4s    &   560     &     7     &    26.0s    &   279     &     7     &    26.2s    &   258     &     7     & $\memout$   &   -       &   -       &   -       & $\memout$     \\
leader(6,11)                              & \ba{153.0s} &   243     &     7     & $\tout$     &   -       &   -       &   174.8s    &   279     &     7     &   776.8s    &   258     &     7     & $\memout$   &   -       &   -       &   -       & $\memout$     \\
\hline
nand(50,3)                                & \ba{  7.0s} &   899     &  1627     &   570.6s    & 140880     &   846     &   231.2s    & 21829     &  4632     & $\tout$     &   -       &   -       &    36.2s    &  1002     &  1400     &  31.0s    &   272.0s      \\
nand(60,4)                                & \ba{  6.0s} &   522     &  2431     & $\tout$     &   -       &   -       &   275.2s    & 25250     &  4494     & $\tout$     &   -       &   -       &    60.2s    &   458     &  2160     &  56.3s    & $\tout$       \\
nand(70,5)                                & \ba{  6.8s} &   391     &  3343     & $\tout$     &   -       &   -       &   370.2s    & 30522     &  4643     & $\tout$     &   -       &   -       &   148.2s    &   308     &  3080     & 144.2s    & $\tout$       \\
\hline
tandem(500)                               &     2.4s    &   243     &   501     &    59.6s    & 43156     &   394     &     6.4s    &  3318     &   489     & \ba{  2.0s} &   412     &   500     &     4.6s    &   219     &   501     &   3.0s    &     3.4s      \\
tandem(1K)                                & \ba{  2.6s} &   243     &  1001     &   328.4s    & 114288     &   632     &    19.2s    &  6932     &   954     &     3.4s    &   858     &   995     &    17.0s    &   219     &  1001     &  12.7s    &    13.0s      \\
tandem(2K)                                & \ba{  3.4s} &   243     &  2001     & $\tout$     &   -       &   -       &    72.4s    & 14881     &  1811     &     6.6s    &  1093     &  1985     &    62.4s    &   219     &  2001     &  59.8s    &    59.4s      \\
\hline
gridworld(300)                            & \ba{  8.2s} &  1187     &   453     &   214.4s    & 46214     &   349     &    81.6s    & 18678     &   437     &    77.4s    & 16663     &   449     & $\memout$   &   -       &   -       &   -       & $\memout$     \\
gridworld(400)                            & \ba{  8.4s} &  1047     &   543     &   274.8s    & 53152     &   399     &   100.6s    & 18909     &   531     &    93.0s    & 16674     &   548     & $\memout$   &   -       &   -       &   -       & $\memout$     \\
gridworld(500)                            & \ba{  5.8s} &   480     &   637     &   277.4s    & 57263     &   431     &   109.4s    & 18025     &   605     &   104.4s    & 15684     &   627     & $\memout$   &   -       &   -       &   -       & $\memout$     \\
\hline
Fig.\ref{fig:self-looping states}(16)                    &    58.6s    &   128     & 140664     & $\tout$     &   -       &   -       & $\tout$     &   -       &   -       & $\tout$     &   -       &   -       &    23.4s    &   115     & 141167     &   0.4s    &     \ba{2.0s}      \\
Fig.\ref{fig:self-looping states}(18)                    & $\tout$     &   -     & -     &     2.8s    &   258     &  1015     & $\tout$     &   -       &   -       & $\tout$     &   -       &   -       &    74.8s    &   115     & 537062     &   1.8s    &     \ba{2.0s}      \\
Fig.\ref{fig:self-looping states}(20)                    & $\tout$     &   -       &   -       &     $\wr$    &   -     &  -     & $\tout$     &   -     &  -     & $\tout$     &   -       &   -       &   513.6s    &   119     & 2195265     &   11.3s    &     \ba{2.0s}      \\
\hline
Fig.\ref{fig:add}(1K,5)                   &   7.8s &  1109     &  2489     & $\tout$     &   -       &   -       &   218.2s    & 23968     &  5916     & $\tout$     &   -       &   -       &     3.2s    &   896     &  1027     &   0.5s    & \ba{  1.2s}   \\
Fig.\ref{fig:add}(1K,50)                  &  12.4s &  1115     &  4306     & $\tout$     &   -       &   -       &   211.8s    & 23908     &  5880     & $\tout$     &   -       &   -       &     3.6s    &   881     &  1037     &   0.7s    & \ba{  1.0s}   \\
Fig.\ref{fig:add}(1K,500)                 &   431.0s    &  1002     & 177777     & $\tout$     &   -       &   -       & 218.6s & 23951     &  5903     & $\tout$     &   -       &   -       &     3.6s    &   968     &  1042     &   1.0s    & \ba{  1.2s}   \\
Fig.\ref{fig:add}(10K,5)                  &    52.2s    &  1161     & 20404     & \ba{  2.6s} &   258     &  1072     & $\tout$     &   -       &   -       & $\tout$     &   -       &   -       &    42.2s    &  1057     & 10100     &  25.4s    &    25.6s      \\
Fig.\ref{fig:add}(100K,5)                 &   604.2s    &  1331     & 200399     & \ba{  2.6s} &   258     &   981     &     5.4s    &   258     &  9939     & $\tout$     &   -       &   -       & $\tout$     &   -       &   -       &   -       & $\tout$       \\
\hline

\end{tabular}
}
\end{table}
\end{landscape}

\section{Implementation details}
\label{app:id}
In our algorithms we frequently check whether the simulated path
contains a candidate with the required strength.  To reduce the
time needed for this operation we use two optimization: 1) we record
SCs visited on the path, 2) we check if a candidate has been found every
$C_b\geq 1$ steps.  Our data structure records the sequence of SCs that
have been encountered on the simulated path. The candidate of the path
is then the last SC in the sequence. We also record the number of
times each state in the candidate has been encountered. By using this
data structure we avoid traversing the entire path every time we
check if a strong $k$-candidate has been reached.

To further reduce the overhead, we update our data structure every $C_b$
steps (in our experiments $C_b=1000$). Figure \ref{fig:steps_extreme}
shows the impact of $C_b$ on the running time for two Markov
chains.  The optimal value of $C_b$ varies among examples, however
experience shows that $C_b \approx 1000$ is a reasonable choice.

\begin{figure}[h]
  \centering
\input{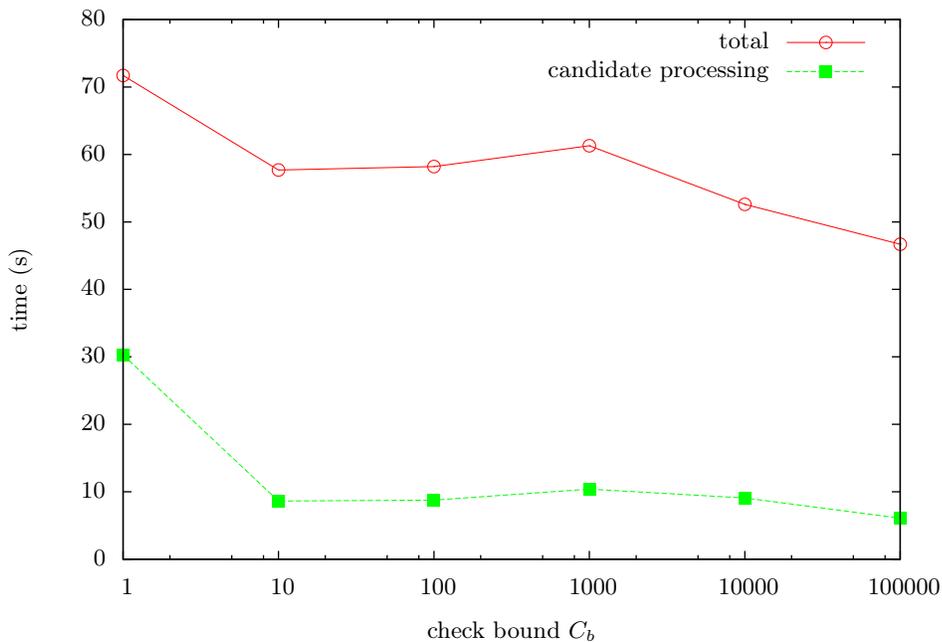}
  \caption{Total running time and time for processing candidates for a Markov chain in Figure \ref{fig:self-looping states} depending on the check bound $C_b$.}
  \label{fig:steps_extreme}
\end{figure}

\begin{figure}[h]
  \centering
\input{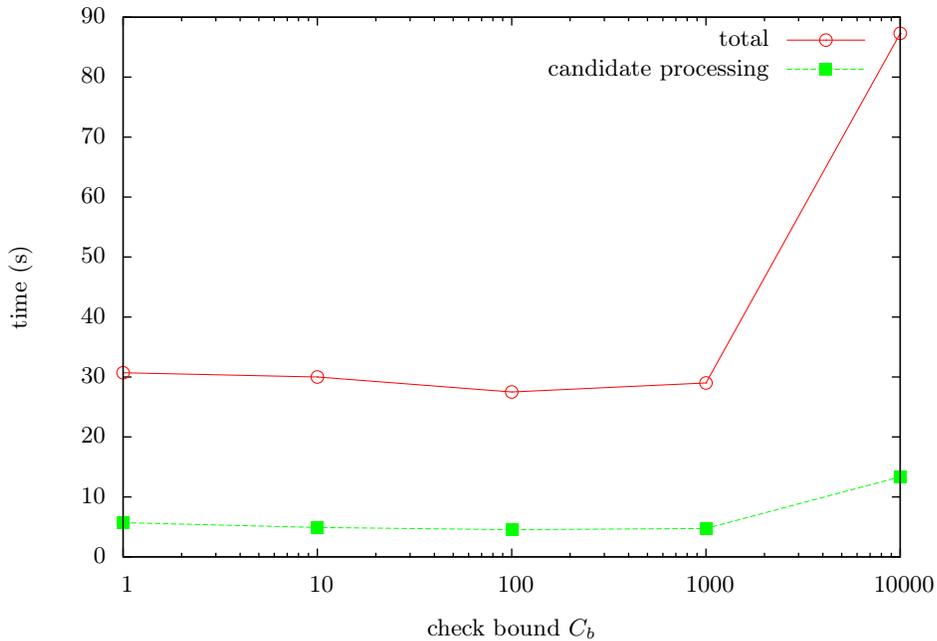}
  \caption{Total running time and time for processing candidates for the 'eql(20,20)' benchmark depending on the check bound $C_b$.}
  \label{fig:steps_eql}
\end{figure}

\section{Theoretical vs.\ empirical running time}
\label{app:match}
In this section, we compare  the theoretical upper bound on running time given in 
Theorem \ref{thm:time} to  empirical data.
We omit the number of simulation runs (term $\simnum$ in the theorem), and report only the logarithm of average simulation length.
Figures \ref{fig:extreme_pred}, \ref{fig:scalen_pred} and \ref{fig:scalem_pred} present
the comparison for different topologies of Markov chains.
In Figure \ref{fig:extreme_pred} we present the comparison for the worst-case Markov chain, which requires the longest paths
to discover the BSCCs as a $k$-candidate.
This Markov chain is like the one in Figure \ref{fig:self-looping states}, but where the last state has a single outgoing transition to the initial state.
Figure \ref{fig:scalen_pred} suggests that the theoretical bound  can be a good predictor of running time with respect to the depth of the system,
 however, Figure \ref{fig:scalem_pred} shows that it is very conservative with respect to the size of BSCCs.

\begin{figure}[h]
  \centering
\input{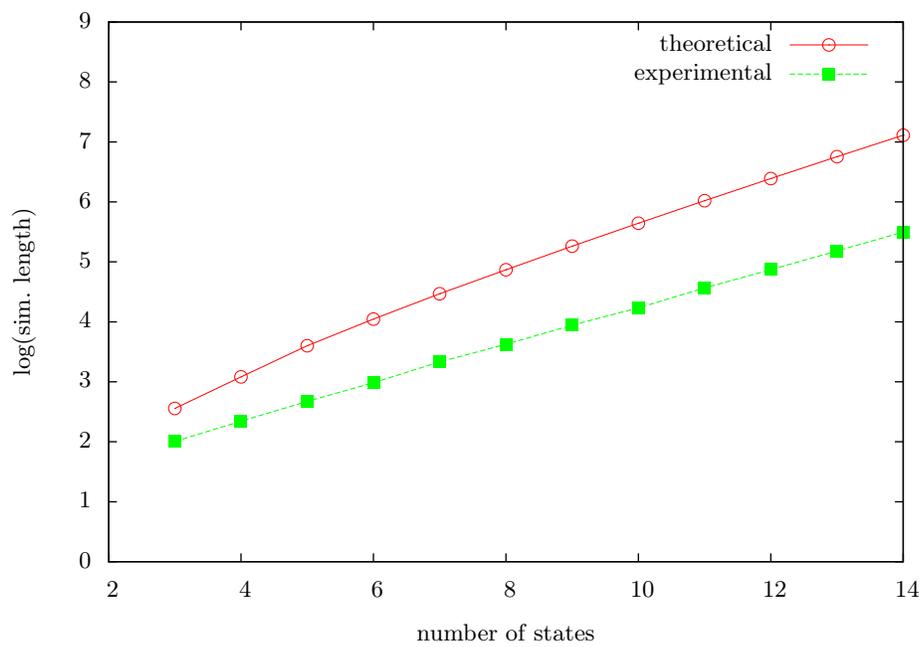}
  \caption{Average length of simulations for a Markov chain like in Figure \ref{fig:self-looping states},
    but where the last state has a single outgoing transition to the initial state.}
  \label{fig:extreme_pred}
\end{figure}

\begin{figure}[h]
  \centering
\input{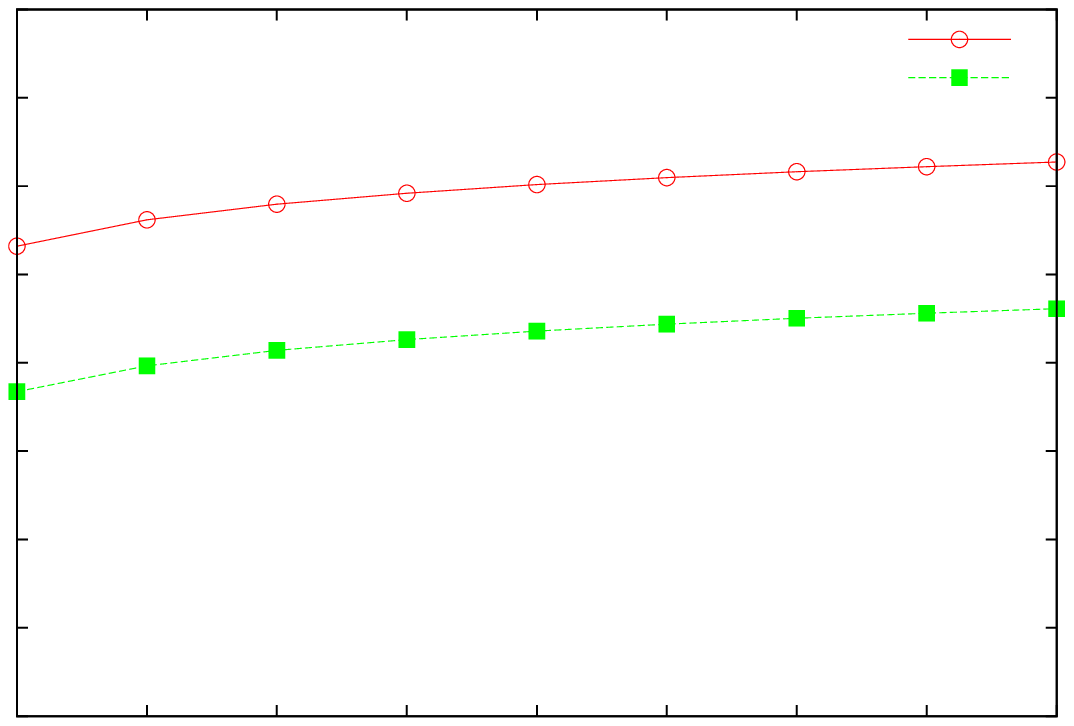}
  \caption{Average length of simulations for the MC in Figure \ref{fig:add}, where $M=5$ and $N$ varies.}
  \label{fig:scalen_pred}
\end{figure}

\begin{figure}[h]
  \centering
\input{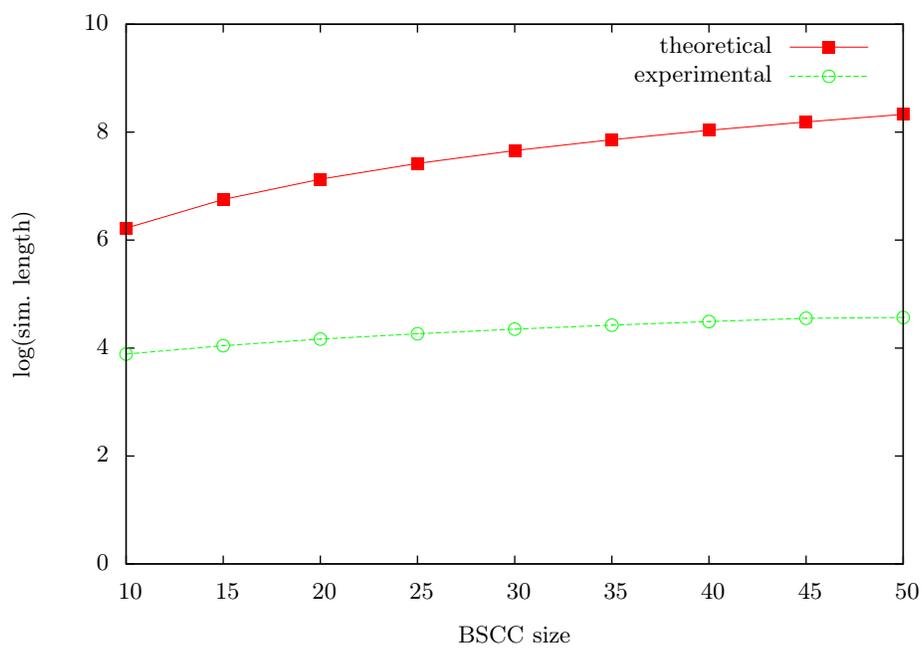}
  \caption{Average length of simulations for the MC in Figure \ref{fig:add}, where $N=1000$ and $M$ varies.}
  \label{fig:scalem_pred}
\end{figure}

\clearpage


\end{document}